\newtheorem{fact}{Fact}
\newtheorem{theorem}{Theorem}
\theoremstyle{definition}
\title{Probabilistic verifiers for asymmetric debates}
\author{H. G{\"{o}}kalp Demirci$ ^{1} $, A. C. Cem Say$ ^{1} $ and Abuzer Yakary{\i}lmaz$ ^{2,} $\footnote{Abuzer Yakary{\i}lmaz was partially supported by FP7 FET-Open project QCS.}
\\ \\
\small $ ^{1} $Bo\u{g}azi\c{c}i University, Department of Computer Engineering, Bebek 34342 \.{I}stanbul, Turkey
\\
\small $ ^{2} $University of Latvia, Faculty of Computing,  Raina bulv. 19, R\={\i}ga, LV-1586 Latvia
\\
\small \texttt{\{gokalp.demirci,say\}@boun.edu.tr,abuzer@lu.lv}
\\ \\
\today
}
\date{\small Keywords: 
partial information debates, 
games, 
alternation, 
probabilistic computation}
\begin{document}

\maketitle

\begin{abstract}
We examine the power of silent constant-space probabilistic verifiers that watch asymmetric debates (where one side is unable to see some of the messages of the other) between two deterministic provers, and try to determine who is right. We prove that probabilistic verifiers outperform their deterministic counterparts as asymmetric debate checkers. It is shown that the membership problem for every language in $\mathsf{NSPACE(s(n))}$ has a $2^{s(n)}$-time debate where one prover is completely blind to the other one, for polynomially bounded space constructible $s(n)$. When partial information is allowed to be seen by the handicapped prover, the class of languages debatable in $2^{s(n)}$ time contains $\mathsf{TIME(2^{s(n)})}$, so a probabilistic finite automaton can solve any decision problem in $\mathsf{P}$ with small error in polynomial time with the aid of such a debate. We also compare our systems with those with a single prover, and with competing-prover interactive proof systems. 
\end{abstract}

\section{Introduction}
\label{sec:intro}
The alternating computation model is well-known to correspond to the problem of finding winning strategies in two-person games of perfect information \cite{CKS81}. Another equivalent way \cite{Go08} of looking at this model is in terms of \textit{debate checking}, where a deterministic Turing machine (the "verifier" $V$ for $L$, the language under consideration) reads an exchange of arguments for and against the membership of the input string $w$ in $L$ written by two opposing "provers" (P1 and P0, respectively), and tries to determine who is right.\footnote{Note that $V$ is just a silent spectator who does not communicate any information to the provers; this is a crucial difference between the debate checking model and the multi-prover interactive protocols with competing provers such as those of \cite{FKS95}. See Section \ref{sec:model}.} While "stand-alone" polynomial-time Turing machines only recognize the languages in the complexity class $\mathsf{P}$, every language in $\mathsf{PSPACE}$ is "debatable" in the sense that P1  can make V accept no matter what P0 says if and only if $w \in L$. This advantage vanishes when the verifier is restricted to use a constant amount of space, where the corresponding class of debatable languages turns out \cite{LLS78} to equal those that can be recognized by stand-alone finite-state automata, namely, the regular languages.

The basic model described above has been extended in several ways. Reif \cite{Re79} studied two-person games of \textit{incomplete information}, where one of the players may not see all of the positions of its opponent, and \textit{zero information},\footnote{This is not to be confused with the unrelated notion of \textit{zero knowledge}.} which correspond to an extreme case where one of the players cannot see what the other one is doing at all. The problems of finding winning strategies in these new types of games correspond to generalizations of the alternating Turing machine (ATM) model that are called \textit{private} and \textit{blind} ATMs, respectively. Adopting the three-agent "verification" view mentioned above, naming what the verifier checks in these new setups requires a generalization of the notion of "debate." In the protocols to be presented in this paper, P0 can  see all the messages of P1, whereas P1 can only see the messages of P0 that are designated as public by P0. The verifier $V$ can of course see all messages of both provers. This may be likened to a debate where one of the debaters is deaf, and can "hear" his opponent only when the opponent chooses to write her message on a board for him to read. We call this process a (potentially) \textit{asymmetric debate}. In contrast with the situation with complete-information debates, the class of languages that are debatable in this generalized sense for constant-space verifiers has  been proven to properly contain those for stand-alone verifiers that are similarly bounded \cite{PR79}.

Another extension that has been considered for the original debate checking model is to upgrade the verifier to a probabilistic, rather than deterministic, TM. Condon et al. \cite{CFLS95,CFLS97} have shown that, when the verifier is a polynomial-time PTM that is restricted to use only logarithmically many random coins, and read only a constant number of bits of the debate, every language in $\mathsf{PSPACE}$ is debatable, even if P0 uses the simple strategy of choosing its messages randomly. 

In this paper, we examine the power of probabilistic verifiers that read potentially asymmetric debates between two deterministic provers. We focus on constant-space verifiers, since these already yield extremely powerful systems: When the debate format just requires that all and only the members of $L$ should be accepted with high probability, but allows P1 to talk endlessly when $w \notin L$, tricking $V$ to run forever without rejecting, all recursively enumerable languages have "one-way interactive proof systems" (i.e. debates without P0). 
We show that, for any space constructible and polynomially bounded function $s(n)$, all languages in the class $\mathsf{TIME(2^{s(n)})}$ have generalized debates that can be checked with constant-space verifiers in time $2^{s(n)}$, and $\mathsf{NSPACE(s(n))}$ has $2^{s(n)}$-time debates where P1 does not see P0. So, for instance, every language in $\mathsf{P}$ has a constant-space polynomial-time debate system, whereas this is not known to be possible even in exponential time for one-way interactive proof systems with similarly bounded verifiers. 

The rest of this paper is structured as follows: We provide the necessary definitions in Section \ref{sec:model}. Our results on the power of probabilistic verifiers of asymmetric debates are presented in Section \ref{sec:results}. In Section \ref{sec:conc}, we list some open questions.

\section{Preliminaries} 
\label{sec:model}

Our formal model of debate systems is a natural generalization of one-way interactive proof systems \cite{Co93B,CL89}. The verifier is a probabilistic Turing machine (PTM) with a read-only input tape and a single read/write work tape. The input tape holds the input string between two occurrences of an end-marker symbol, and we assume that the machine's transition function never attempts to move the input head beyond the end-markers. The input tape head is on the left end-marker at the start of the process. The verifier reads information written by two competing provers, P1 and P0, by alternately consulting two \textit{reading cells}, C1 and C0. When it is consulted for the $i$th time, cell C1 returns the $i$th symbol in P1's argument for the membership of the input string in the language under consideration. C0 similarly yields the next symbol in P0's counterargument. The symbols used in this communication are chosen from two mutually exclusive alphabets named $\Gamma$ and $\Delta$. $\Gamma$ is the set of symbols that can be seen by the opposing prover, and P1 is only allowed to use this alphabet in its messages. $\Delta$ is the set of private symbols that P0 may choose to write to the verifier without showing to P1 (this is the crucial component in our definition that allows for asymmetric debates). P0 can use any member of $\Gamma \cup \Delta $ in its messages. P1 (resp., P0) is assumed to have seen the subsequence of all the public symbols (i.e. those in $\Gamma$) among the first $i-1$ (resp., $i$) symbols emitted up to that time before preparing its $i$th symbol. Each prover can be modeled as a (not necessarily computable) prover transition function, which determines the $i$th symbol $\zeta$ based on the input string and the part of the communication that it has seen so far. Using this communication infrastructure, P1 attempts to convince the verifier that it is right, and P0 tries to prove P1 wrong. We allow the possibility that the cheating prover sends an infinite sequence of symbols, a contingency that could cause careless verifiers to run forever. The verifier also has access to a source of private random bits. The state set of the verifier TM is $Q$, containing, among others, two special halting states, $q_{a}$ (accept) and $q_{r}$ (reject). One of the non-halting states is designated as the start state. 
A subset $R$ of $Q$ that does not contain the halting states is designated to be the set of \textit{reading states}. $R$ is partitioned to two subsets, $R_1$ and $R_0$, where the program of the machine (to be described shortly) ensures that the first reading state to be entered is in $R_1$, and the sequence of reading states entered during computation alternates between members of $R_1$ and $R_0$. Whenever a state $q \in R_i$ is entered, the next symbol of P$i$'s message is written in reading cell C$i$. Let $\lozenge=\{-1,0,+1\}$ denote the set of possible head movement directions. The program of the verifier is formalized by the verifier transition function $\delta$ as follows: For $i \in \{0,1\}$ and $q \in R_i$, $\delta(q,\sigma,\theta,\zeta_i,b)=(q',\theta',d_{ih},d_{wh})$ means that the machine will switch to state $q'$, write $\theta'$ on the work tape, move the input head in direction $d_{ih} \in \lozenge$, and the work tape head in direction $d_{wh} \in \lozenge$, if it is originally in state $q$, scanning the symbols $\sigma$,  $\theta$, $\zeta_i$, in the input and work tapes, and the reading cell C$i$, respectively, and seeing the random bit $b$ as a result of the coin toss. If $q \in Q - (R \cup \{q_{a},q_{r}\})$, a restricted version of the verifier transition function described above that does not involve the symbol from the reading cell is applied. The probability that $V$ accepts an input string $w$ (i.e. ends up in $q_{a}$) as the result of watching a debate between P1 and P0 is denoted by $a^w_{(V,P1,P0)}$. $r^w_{(V,P1,P0)}$ denotes the probability that $V$ rejects $w$ in such a scenario.

We say that language $L$ \textit{has a partial-information debate with error probability} $\varepsilon$ if there exist a verifier $V$ and a prover P1 such that
\begin{enumerate}
	\item for every $w \in L$, and for any prover $\mathrm{P0^*}$, $a^w_{(V,P1,P0^*)} \geq 1-\varepsilon$,  and,
	\item for any prover $\Pi$, there exists a prover $\mathrm{P0_\Pi }$,  such that for every $w \notin L$, $r^w_{(V,\Pi,P0_\Pi)} \geq 1-\varepsilon$.
\end{enumerate}

\indent We will also see some systems where condition (2) above is relaxed, so that the verifier is allowed to fail to halt with high probability for $w \notin L$. 

$\mathsf{PDEB(s,t)}$ is the class of languages that have partial-information debates with error probability $\varepsilon$ for some $\varepsilon<\frac{1}{2}$, such that the verifier of the debate has a worst-case space bound of $O(s(n))$, and an expected time bound of $O(t(n))$. Classes of languages with \textit{zero-information debates} (i.e. where P0 never emits a symbol in $\Gamma$) have names of the form $\mathsf{ZDEB(s,t)}$.


A \textit{one-way interactive proof system} is simply a debate system without P0. (One may assume that consulting C0 always yields the same symbol in this case.)

We focus on verifiers with constant space. The following facts are known in this regard about one-way interactive proof systems:

\begin{fact}\label{fact:onewayIPupper}
All languages that have one-way interactive proof systems with constant-space verifiers are in
$\mathsf{NTIME(2^{2^{O(n)}})}$. \cite{CL89}
\end{fact}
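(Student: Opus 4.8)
The plan is to exploit the fact that, in a one-way interactive proof system, the verifier is silent and there is no P0, so for a fixed input $w$ the prover P1 has no information to react to: its entire strategy is just a (possibly infinite) string $\pi$ over the public alphabet $\Gamma$. Writing $A_w(\pi)$ for the probability that $V$ ends in $q_a$ while reading $\pi$ on input $w$, the promise becomes purely quantitative: if $w\in L$ there is a $\pi$ with $A_w(\pi)\ge 1-\varepsilon$, while if $w\notin L$ the rejection probability is $\ge 1-\varepsilon$ for every $\pi$, hence $A_w(\pi)\le\varepsilon<\tfrac12$ for all $\pi$. Since $q_a$ is an absorbing halting configuration, $A_w$ evaluated on a prefix of $\pi$ is nondecreasing in the prefix length; the supremum over infinite strings is therefore already approached by finite prefixes, and by the gap it suffices to decide the existence of a \emph{finite} string $\pi$ with $A_w(\pi)>\tfrac12$.

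First I would make the state space finite and explicit. With constant work space the verifier has only $O(1)$ combinations of state, work-tape contents and work-tape head position, so together with the $n+2$ input-head positions it has $N=O(n)$ distinct configurations. Collapsing the internal non-reading computation between two consecutive consultations of C1, each public symbol $a\in\Gamma$ induces a substochastic matrix $M_a$ on the set $S$ of reading configurations, together with the probabilities of being absorbed into $q_a$ or $q_r$ before the next read; mass lost to internal nonhalting loops is simply left unaccounted for, which is harmless since acceptance is the probability of ever reaching $q_a$. A prover string $\pi=a_1a_2\cdots$ then drives the distribution $d_t=d_0M_{a_1}\cdots M_{a_t}$ over $S$, and $A_w(\pi)$ is the total acceptance mass accumulated along the way. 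The question from the previous paragraph is thus exactly the isolated-cutpoint emptiness problem for a probabilistic automaton with $N=O(n)$ states: is there an input driving the accumulated acceptance mass above $\tfrac12$?

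The decision procedure is then as follows: an NTM guesses the witnessing string one symbol at a time, maintains the exact distribution $d_t$ over the $N$ configurations together with the acceptance mass accumulated so far, and accepts as soon as this mass exceeds $\tfrac12$. For $w\notin L$ no guess can succeed, so there is no accepting branch. To fit this into $\mathsf{NTIME(2^{2^{O(n)}})}$ I need two bounds. The entries of $d_t$ are rationals whose denominators are powers of the coin-tossing base, and they can be kept in $2^{O(n)}$ bits, so each update costs $2^{O(n)}$ time. The remaining ingredient is a bound $T=2^{2^{O(n)}}$ on the length of a shortest witness, after which guessing may stop. I would obtain $T$ by a pigeonhole argument on a finite net of the distribution simplex: discretizing each of the $N$ coordinates to precision $2^{-2^{O(n)}}$ yields a grid with $2^{2^{O(n)}}$ cells, and one argues that along an optimal trajectory the acceptance mass cannot keep strictly increasing past the gap once a cell is revisited, so a witness of length at most the number of cells suffices. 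Guessing up to $T$ symbols, each processed in $2^{O(n)}$ time, keeps the total running time at $2^{2^{O(n)}}$.

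The main obstacle is precisely this length/precision bound. The difficulty is that rounding $d_t$ to the grid introduces an error that the substochastic maps can amplify over many steps, while the number of steps is itself governed by the number of grid cells, so the two quantities must be chosen consistently. The way through is to use the promise gap $\varepsilon$ as a fixed budget of slack: one shows that the $M_a$ cannot separate two distributions lying in the same sufficiently fine cell by more than a controlled amount before the accept/reject decision is forced, so that replacing $d_t$ by its grid representative never flips whether the $\tfrac12$ threshold is ultimately crossed. Pinning down the resolution that makes this safe, as a function of $\varepsilon$ and of the smallest nonzero transition probability (which is where the inner exponential of $2^{2^{O(n)}}$ enters), is the technically delicate part; once it is in place, the pigeonhole bound on the number of cells together with the straightforward NTM simulation yields the claimed upper bound.
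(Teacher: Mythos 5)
The paper gives no proof of this fact; it is imported verbatim from \cite{CL89}, so your attempt can only be measured against the known proof strategy of that reference. Your reduction is set up correctly and is indeed the standard one: since the verifier is silent and there is no P0, a prover strategy is a single string $\pi$; collapsing the non-reading computation between consultations of C1 turns the constant-space verifier on input $w$ into a substochastic automaton over $N=O(n)$ configurations; acceptance mass is monotone along prefixes; and the gap promise reduces everything to deciding existence of a finite string pushing the accumulated mass past $\tfrac12$, which an NTM attacks by guessing $\pi$ symbol by symbol. (Two small inaccuracies: the entries of the matrices $M_a$ are absorption probabilities of internal loops, hence rationals of polynomial bit-size but not powers of the coin base; and under exact arithmetic the bit-size of $d_t$ grows linearly in $t$, not $2^{O(n)}$ --- both are harmless, since even $t\cdot\mathrm{poly}(n)$-bit arithmetic over $2^{2^{O(n)}}$ steps stays within $2^{2^{O(n)}}$ time.)

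The genuine gap is exactly the step you defer to the end: the doubly exponential bound on the length of a shortest witness. The pigeonhole-plus-cutting argument you sketch cannot be completed, for a quantitative reason, not a matter of ``pinning down the resolution.'' Suppose the cells of your grid have $\ell_1$-diameter $\delta$ in the pair (configuration distribution, accumulated mass). If two prefixes $x$ and $y$ of a witness $\pi=yz$ fall in the same cell, the only available estimate is $|A(xz)-A(yz)|\le |a(x)-a(y)|+\|d_x-d_y\|_1\le 2\delta$, so each cut costs $\Theta(\delta)$, and the drift it creates persists into all later comparisons. Compressing a witness of a priori unbounded length $T$ down to length at most the number of cells $G$ requires either $T-G$ iterated single cuts (unbounded, and circular: you can only afford about $(\tfrac12-\varepsilon)/\delta$ of them, which presupposes the length bound you are trying to prove) or, with a jump-to-the-last-occurrence scheme, up to $G$ cuts with additively compounding drift. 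Either way the total loss is at least of order $G\delta$; but any partition of the $N$-dimensional distribution simplex into cells of diameter $\delta$ has $G\ge (c/\delta)^{N-1}$ cells, so $G\delta\ge c^{N-1}\delta^{-(N-2)}\ge 1$ for every $\delta<1$ once $N\ge 3$, while your budget is the constant $\tfrac12-\varepsilon$. Refining $\delta$ increases the number of cells, hence of cuts, strictly faster than it decreases the per-cut cost, so no tuning of the resolution against $\varepsilon$ and the minimum transition probability rescues the scheme. A concrete instance shows the argument is vacuous even when the conclusion holds: a chain that gains mass $1-(1-p)^t$ for tiny $p$ sits in essentially one distribution-cell while the mass coordinate crawls through $\approx 1/\delta$ levels, and cutting loops destroys essentially all the accumulated mass. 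What is actually needed --- and what \cite{CL89} supplies --- is an argument relating the speed of mass accumulation to structural quantities of the chain (minimum transition probabilities, supports, absorption behavior), which is a different mechanism from counting grid cells; your proof is therefore incomplete at its central point.
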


\begin{fact}\label{fact:onewayIPlower}
Every language in $\mathsf{NSPACE(n)}$ has a one-way interactive proof system with a constant-space verifier.\footnote{This follows from a simplification of the proof of Theorem 3.12 in \cite{DS92}.}
\end{fact}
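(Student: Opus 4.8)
The plan is to take a nondeterministic Turing machine $M$ that recognizes $L$ in space $O(n)$ and to build a one-way interactive proof system (P1 only, no P0) in which the single prover's job is to stream an accepting computation of $M$ on $w$ into the reading cell C1. Since $M$ is linear-space, each configuration of $M$ on $w$ can be written as a string of length $\Theta(n)$, and $w \in L$ holds precisely when there is a finite sequence $C_0,C_1,\dots,C_k$ of configurations in which $C_0$ is the initial configuration determined by $w$, each $C_{i+1}$ is a legal successor of $C_i$ under $M$, and $C_k$ is accepting; the length $k$ may be bounded by $2^{O(n)}$, matching the intended time. The first step is the familiar reduction of this global condition to purely local ones, in the style of the Cook--Levin tableau: verifying the two endpoints, and checking, for each step, that the symbol of $C_{i+1}$ at every tape position $p$ is the one forced by the symbols of $C_i$ at positions $p-1,p,p+1$ under $M$'s transition relation.

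The resource difficulty is that the verifier has only constant work space, so it cannot even keep a counter up to $n$, let alone store a configuration. My plan is to use the two-way read-only input head --- whose track has length $\Theta(n)$ --- as the verifier's only means of holding a position inside a configuration, and to require P1 to transmit each pair of consecutive configurations \emph{symbol-interleaved}, emitting $c_i[1]\,c_{i+1}[1]\,c_i[2]\,c_{i+1}[2]\cdots$. In this format the symbols compared by a single window check arrive within $O(1)$ positions of one another in the stream, so the verifier can test \emph{every} position of a given step deterministically in one pass, using only a fixed-size window in its finite control and no counter at all; the endpoint checks (that $C_0$ matches $w$, and that $C_k$ accepts) become symbol-by-symbol comparisons against the input, for which the input head serves as the ruler.

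The step I expect to be the main obstacle is soundness: forcing the error below $\tfrac{1}{2}$ against an arbitrary prover. The within-pair window checks are exact but insufficient by themselves, because the configuration $C_{i+1}$ is sent twice --- once as the second component of $(C_i,C_{i+1})$ and once as the first component of $(C_{i+1},C_{i+2})$ --- and a cheating prover could feed two disagreeing strings for these instances, so the verifier must also certify that shared configurations are reproduced faithfully. Comparing the two instances amounts to comparing two length-$\Theta(n)$ strings that lie far apart in the read-once stream, which a constant-space verifier can only do probabilistically: it tests a random position (located in both blocks with the two-way head) and catches a discrepancy with probability merely $\Theta(1/n)$ per attempt. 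My plan is therefore to let P1 repeat the entire interleaved transcript and to have the verifier spend its $2^{O(n)}$-time budget on many independent probes, each driven by fresh coins and each rejecting the instant it sees a failed window, a mismatched endpoint, or a disagreement between shared configurations; since for $w \notin L$ no transmitted transcript can encode a genuine accepting computation, some defect is always present, and enough passes drive the total miss probability below any constant. If condition~(2) is read in its relaxed form the task is easier still, as the verifier then only needs to avoid accepting and may probe indefinitely instead of ever committing to a premature reject. Making completeness (accepting $w \in L$ with probability at least $1-\varepsilon$ in finite expected time) coexist with this boosted soundness is exactly where the construction specializes, and simplifies, the amplification argument behind Theorem~3.12 of \cite{DS92}.
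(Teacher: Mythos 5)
Your protocol skeleton---a single prover endlessly re-streaming a claimed accepting computation of a linear-space machine, with local window checks plus a probabilistic cross-check---is the right one, and the interleaving trick is a nice simplification: it correctly isolates the one task that cannot be done deterministically, namely checking that the two transmitted copies of each $C_{i+1}$ agree across adjacent pairs. But the attempt omits the two ideas that are the actual content of the proof this fact points to (the technique of Theorem \ref{theorem:ZeroInfTime}, following Theorem 3.12 of \cite{DS92}). First, you never specify when the verifier \emph{accepts}, and this cannot be patched by ``enough passes'': a constant-space verifier has no clock and cannot count passes, so it can neither run a predetermined number of probes nor recognize that it has done so, and a ``$2^{O(n)}$-time budget'' is not something it can enforce. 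The only available mechanism is to accept with a small coin-determined probability at the end of each clean round, letting acceptance race against rejection; soundness then requires this acceptance probability to be a factor $2^{\Omega(l)}$ below the probability of catching a planted defect \emph{wherever the prover hides it}. Your probe does not supply this: with constant space the natural coin-driven stopping rule selects positions geometrically, so a single-symbol discrepancy planted at the far end of a configuration is caught with probability $2^{-\Theta(n)}$, not the $\Theta(1/n)$ you assume, and nothing in your sketch ties the accept probability to this catch probability. That calibration---accomplished in the paper by Freivalds' technique, with its matched all-zeros events of probabilities $2^{-4lj-l}$, $2^{-4lk-l}$, $2^{-2l(j+k)}$ and the resulting ratios $2^{l-1}$ and $2^{l-2}$---is precisely the step you defer to \cite{DS92}, i.e., to the proof you were asked to reconstruct.

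Second, the statement (in the paper's default reading; the non-halting relaxation is a different and much weaker claim, already subsumed by Fact 3) requires rejection with probability at least $1-\varepsilon$ for $w \notin L$, in particular halting. Against your verifier, a prover for a nonmember whose machine loops can stream a perfectly legal, defect-free, never-accepting computation forever; your verifier then finds nothing to reject, never completes an accepting round, and so never halts. The paper's construction forestalls exactly this with the counter track added to $M$ ($M$ halts unconditionally once the counter reaches $2^{cn}$), which guarantees that every infinite purported computation contains infinitely many transition or copying errors, so that each error contributes a fresh nonzero rejection probability and the total rejection probability accumulates to $1$. Without the counter (or an equivalent device bounding the length of a round), your closing retreat to the relaxed condition (2) concedes the theorem rather than proving it.
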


\begin{fact}
One-way interactive proof systems which are allowed to fail to halt for nonmembers of the debated language exist for all recursively enumerable languages. \cite{CL89}
\end{fact}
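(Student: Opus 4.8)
The plan is to reduce membership in an arbitrary recursively enumerable language $L$ to the existence of a finite accepting computation history of a fixed Turing machine $M$ recognizing $L$. Recall that $w \in L$ if and only if there is a finite sequence of configurations $C_0, C_1, \ldots, C_m$ of $M$ such that $C_0$ is the start configuration of $M$ on $w$, each $C_{i+1}$ follows from $C_i$ by one step of $M$, and $C_m$ is accepting. The single prover P1 will be asked to write down (an encoding of) such a history on the communication stream, and the job of the constant-space verifier $V$ is to check that what it reads really is a valid accepting history. Since P1's stream is one-way and a configuration can be arbitrarily long, $V$ can store neither a configuration nor even a position index, so every check must be either local or probabilistic.

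First I would dispose of the checks that a deterministic constant-space machine can already perform. Verifying that $C_0$ encodes the correct start configuration is a symbol-by-symbol comparison against the input $w$, which $V$ can do in lockstep using its two-way input head as it reads $C_0$ from the stream; verifying that $C_m$ is accepting is a local test on the final block; and verifying that consecutive configurations obey $M$'s transition rule is local, since $C_i$ and $C_{i+1}$ may differ only in the constantly many cells around the head. To make this last test checkable without counting, I would require P1 to present each transition ``column by column,'' emitting the matched pairs $(C_i[j], C_{i+1}[j])$ in order of $j$; then $V$ need only check each adjacent pair against the transition relation, which is a finite-state computation.

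The crux, and the step I expect to be the main obstacle, is the cross-configuration consistency check: the copy of $C_{i+1}$ that P1 emits as the ``output'' half of transition $i$ must agree, symbol for symbol and position for position, with the copy it emits as the ``input'' half of transition $i+1$, and these two occurrences are separated by a full configuration's worth of symbols. A constant-space verifier cannot align them by counting, so here I would invoke a randomized equality-testing subroutine in the style of Condon and Lipton: $V$ uses its private coins to latch probabilistically onto a single position, carries forward only the constantly many bits this requires, and compares the remembered symbol with the corresponding symbol in the next occurrence. The point is that any cheating P1 (one of which must exist whenever $w \notin L$, since then no consistent accepting history exists) is forced to plant a mismatch that this test detects with probability bounded away from zero.

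Finally I would assemble the pieces, leaning essentially on the relaxed halting condition. For completeness, when $w \in L$ the honest prover supplies the genuine history, every local and equality check passes, and $V$ accepts with high probability. For soundness, when $w \notin L$ every claimed history contains an inconsistency; the verifier is designed to accept only upon completing a pass in which all of its randomized checks succeed, and an inconsistency keeps the probability of such a ``lucky'' acceptance small, while whenever a check is inconclusive the verifier simply continues or restarts. This is exactly where the permission to run forever is indispensable: against a cheating prover $V$ is never forced to commit to a verdict, so it can trade the impossible goal of ``reject with high probability'' for the achievable one of ``fail to accept with high probability,'' looping rather than erring. The delicate part of the write-up is calibrating the equality test and the restart discipline so that the overall acceptance probability for non-members stays below $\varepsilon$ uniformly over all prover strategies, despite the per-round detection probability potentially shrinking with the unbounded length of the configurations.
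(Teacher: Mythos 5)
Your overall skeleton is the right one, and it matches the known argument behind this fact (the paper itself only cites \cite{CL89} for it, but the necessary machinery is exactly what it builds in the proof of Theorem~\ref{theorem:ZeroInfTime}): have P1 stream a purported accepting computation history, make the transition check local by pairing columns, and reduce everything to checking equality of two widely separated copies of each configuration, with the relaxed halting condition absorbing stalling provers. The first genuine gap is inside the step you yourself call the crux. The subroutine you describe --- ``latch probabilistically onto a single position \ldots and compare the remembered symbol with the corresponding symbol in the next occurrence'' --- does not exist, because locating \emph{the corresponding} position in the second copy is precisely what constant space forbids; that is the alignment problem restated, not solved. The actual technique (Freivalds' coin-weighing \cite{Fr81}, used in Theorem~\ref{theorem:ZeroInfTime} with the $4lj+l$ ``accept coins'' and $2l(j+k)$ ``control coins'') never identifies corresponding positions: it attaches to each randomly chosen position coin events whose probabilities encode that position, and every check has a three-way outcome --- accept outright, actually perform the symbol comparison, or discard and continue --- with correctness resting entirely on the \emph{ratio} of these outcome probabilities. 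Moreover, in your single-prover setting the verifier must pick the positions in \emph{both} copies itself (in Theorem~\ref{theorem:ZeroInfTime} it is P0 who points at them), so misaligned comparisons occur even against the honest prover and must be swamped by accept events by design, or completeness fails too.

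The second gap is consequent and is fatal as stated: you have $V$ ``accept only upon completing a pass in which all of its randomized checks succeed,'' while conceding at the end that the per-check detection probability shrinks with configuration length. The concession is correct (detection probability is $2^{-\Theta(j)}$ at position $j$), and it contradicts your earlier claim that a mismatch is detected ``with probability bounded away from zero.'' Under your acceptance rule, a prover for $w \notin L$ who plants a single error inside a long configuration survives every check with probability exponentially close to $1$, so $V$ accepts with probability near $1$: soundness is not a calibration away, it is broken. The repair --- which is the real content of \cite{CL89} and of the analysis in Theorem~\ref{theorem:ZeroInfTime} --- is to make \emph{acceptance itself} a rare coin event of magnitude comparable to the detection events, require the prover to repeat its claimed history forever, and let the verdict be decided by which rare event fires first. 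Then for $w \in L$ accept events dominate and the halting probability accumulates to $1$ over the infinitely many repetitions, while for $w \notin L$ each erroneous transition makes reject/restart events dominate accepts by a factor like $2^{l-2}$, so the acceptance probability stays below $\varepsilon$ uniformly over prover strategies and configuration lengths; a prover who instead stalls or emits an infinite ``configuration'' merely makes $V$ run forever, which is exactly what the relaxed halting condition permits.
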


\begin{fact}
The class of languages with one-way interactive proof systems where the verifier is restricted to use a constant number of random bits is precisely $\mathsf{NL}$. \cite{SY12}
\end{fact}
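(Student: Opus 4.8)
The plan is to establish the two inclusions separately. First observe that since the verifier is silent, in a one-way system the prover receives no information during the protocol, so on input $w$ its entire strategy collapses to a single (possibly infinite) string $y=y(w)$ to which it is committed before any coins are flipped. Writing $c=O(1)$ for the number of random bits, fixing each of the $2^{c}$ coin outcomes $\rho$ turns $V$ into a deterministic constant-space machine $V_{\rho}$ that scans $w$ two-way and reads the common certificate $y$ one-way. Membership then amounts to the statement that there exists a certificate $y$ on which strictly more than half of the $V_{\rho}$ halt and accept: by completeness the honest prover's $y$ makes a fraction at least $1-\varepsilon>\tfrac12$ of the $\rho$ accept when $w\in L$, while by soundness no $y$ can push the accepting fraction above $\varepsilon<\tfrac12$ when $w\notin L$, and the two ranges are separated by the threshold $2^{c-1}$.

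For the inclusion into $\mathsf{NL}$ I would abstract each $V_{\rho}$ into a finite automaton $H_{\rho}$ whose states are the configurations of $V_{\rho}$ at the instants it consults its reading cell. Folding the deterministic between-read computation (including detection of a non-reading infinite loop, which is possible in logarithmic space by running $V_{\rho}$ for as many steps as it has configurations) into the transitions, each symbol of $y$ advances $H_{\rho}$ by exactly one edge to the next reading configuration or to an absorbing halt-accept, halt-reject, or loop sink. Since every $V_{\rho}$ reads the same $y$ from left to right, a component that halts early simply ignores the remaining symbols, so the $2^{c}$ automata can be run synchronously on $y$ with no consistency problem. Because $2^{c}$ is constant, the product of the $H_{\rho}$ over a common input tape has only polynomially many joint states, each storing $2^{c}$ reading configurations and hence only $O(\log n)$ bits. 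As the number of halt-accepted components is monotone nondecreasing, $w\in L$ is equivalent to the reachability, in this implicitly described polynomial-size product graph, of a joint state in which more than $2^{c-1}$ components have accepted. Guessing $y$ symbol by symbol and following edges is exactly a nondeterministic logarithmic-space reachability computation, so the language lies in $\mathsf{NL}$.

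For the reverse inclusion it suffices to use zero coins: given $L\in\mathsf{NL}$ recognized by a logarithmic-space nondeterministic machine $M$, I would have the prover write an accepting computation of $M$ on $w$ as a tableau of successive configurations, and have a deterministic constant-space verifier check it while reading it once from left to right. The delicate point---and the main obstacle---is that each configuration carries an $\Theta(\log n)$-cell work tape and an input-head position as large as $n$, neither of which a constant-space verifier can store or count. I would resolve the work-tape part by writing consecutive rows of the tableau interleaved cell by cell, so that the local Cook--Levin consistency of a cell with the three cells above it can be verified by a fixed-width sliding window held in the finite control, together with a well-formedness check on the delimiters to foil a prover who tries to misalign the rows. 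The input-head position is handled without counting by the verifier keeping its own two-way input head synchronized with $M$'s: every input-head move recorded in the certificate is mirrored, and the scanned symbol asserted by $M$ is compared against the symbol actually under the verifier's head. Because $M$ runs in polynomial time the tableau is polynomially long, the verifier is deterministic, constant-space, and errs with probability zero, and a passing certificate exists precisely when $w\in L$. Together with the trivial fact that a deterministic verifier is in particular a constant-randomness verifier, this yields $\mathsf{NL}$ exactly.
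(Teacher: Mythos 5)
The paper does not actually prove this statement---it is quoted as a known result from [SY12]---so your proposal can only be judged on its own merits. Your first direction (containment in $\mathsf{NL}$) is essentially sound: since the verifier is silent and there is no P0, the prover's strategy does collapse to a fixed string $y(w)$; fixing the $2^c$ coin outcomes gives constantly many deterministic constant-space machines; and nondeterministic reachability in the polynomial-size product of their ``reading-configuration'' automata (with loop detection folded into the edges, and monotonicity of the set of accepted components justifying the threshold test) is a correct $\mathsf{NL}$ upper bound.

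The reverse inclusion, however, contains a fatal gap, and it is not one that can be patched: your claim that \emph{zero} random bits suffice is false. A deterministic constant-space verifier with one-way access to a certificate is equivalent to a two-way nondeterministic finite automaton---whenever the verifier consults its reading cell, the automaton simply guesses the symbol---so a string has an accepting certificate if and only if that 2NFA has an accepting run, and by Rabin--Scott/Shepherdson such automata accept only regular languages. Hence the class of languages with deterministic-verifier one-way proof systems is exactly the regular languages, and no certificate format whatsoever can handle a nonregular $\mathsf{NL}$ language such as $\{0^n 1^n : n \geq 0\}$. Concretely, your Cook--Levin interleaving breaks down at the duplication step: to check row $i$ against both row $i-1$ and row $i+1$ with a constant-width window, each row of the tableau must be written at least twice, and a constant-space one-way verifier has no way to check that the two copies of a $\Theta(\log n)$-cell row agree; a cheating prover writes each interleaved pair locally consistent while letting the duplicated rows differ, and the ``delimiter well-formedness'' check does not see this. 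That equality test is precisely the point where randomness is indispensable---compare the Freivalds-style comparisons in the paper's own Theorem 1---and the [SY12] proof of $\mathsf{NL} \subseteq$ (this class) makes essential use of the constant number of coins. Your closing remark that a deterministic verifier is trivially a constant-randomness verifier is true but vacuous here, because the deterministic verifier you need does not exist.
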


(Two-prover) debate systems with deterministic verifiers have been studied extensively:

\begin{fact}\label{fact:detvercompletedeb}
When the verifier is a deterministic constant-space TM, and the debaters display complete information to each other, no nonregular language is debatable. \cite{LLS78}
\end{fact}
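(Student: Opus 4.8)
The plan is to show that a complete-information debate watched by a constant-space deterministic verifier is nothing more than a two-way alternating finite automaton (2AFA), and then to invoke the classical fact that such automata accept only regular languages. First I would set up the correspondence. Since the verifier uses only constant space, its work tape contributes just finitely many configurations, so the verifier together with its finite control is a two-way finite-state device over the input alphabet. The forced alternation of reading states between $R_1$ and $R_0$ turns the symbols emitted by P1 into existential (OR) branching and those emitted by P0 into universal (AND) branching: P1 tries to steer the computation into $q_a$ no matter what P0 does, which is exactly the acceptance condition of a 2AFA whose existential nondeterminism is resolved by P1 and whose universal nondeterminism is resolved by P0. Because the provers display complete information to each other, the induced game is one of perfect information on a finite arena, so its value is well defined, and $w$ is debatable precisely when P1 has a winning strategy, i.e. when the corresponding 2AFA accepts $w$. (The converse inclusion, that every regular language is debatable, is immediate, since a deterministic finite automaton for $L$ gives a verifier that simply ignores both provers.)

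Given this correspondence, it remains to argue that 2AFA recognize only regular languages, and I would do this with a crossing-sequence argument generalized to alternation. For each prefix $u$ of the input I would define a finite object $\tau(u)$, an \emph{alternating transition profile}, that records, for every state $q$ in which the head can enter the portion $u$ from its right boundary, the value (won by P1 or not) of the subgame played entirely within $u$ starting in state $q$, where ``winning'' means eventually reaching $q_a$, or exiting $u$ to the right in a configuration from which P1 can still force acceptance on the remaining tape. Since the verifier is finite-state, only finitely many such profiles exist. The key algebraic step is to show that $\tau(u\sigma)$ depends only on $\tau(u)$ and the symbol $\sigma$; this yields a right-congruence of finite index, and since membership of $w$ is determined by $\tau(w)$ together with the behavior at the left end-marker, the language is regular by Myhill--Nerode.

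The hard part will be defining the transition profile so that it correctly captures the \emph{game} value across multiple re-entries of the left portion: the head may cross the boundary back and forth many times, and at each crossing control passes between P1's existential choices and P0's universal choices, so $\tau$ must encode a fixed-point over these mutual re-entries rather than the single-pass outcome that suffices for ordinary two-way deterministic automata. One must also fix a convention for non-halting plays, namely that a play which loops forever without reaching $q_a$ counts as a loss for P1, and verify that this convention is preserved under the composition of profiles. Establishing the well-definedness and compositionality of $\tau$ under these constraints is the technical core; once it is in place, finiteness of the profile set and the Myhill--Nerode conclusion follow routinely, giving that no nonregular language is debatable. This is exactly the line of reasoning carried out in \cite{LLS78}.
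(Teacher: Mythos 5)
The paper itself gives no proof of this statement: it is quoted as a Fact from \cite{LLS78}, so your reconstruction can only be measured against that classical argument. Your overall architecture is the right one, and it is the same as in the cited work: fold the constant-size work tape into the finite control, so that a complete-information debate checked by a deterministic constant-space verifier becomes a two-way alternating finite automaton in which P1's symbols resolve existential branching and P0's symbols resolve universal branching (with infinite plays counting as losses for P1), and then show that two-way alternating finite automata accept only regular languages by a Shepherdson-style profile argument. The reduction half of your proposal, including the perfect-information/determinacy remark and the easy converse, is fine.

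The genuine gap is in your definition of the profile $\tau(u)$, and it is not a cosmetic one. As stated, $\tau(u)$ records for each entry state whether P1 wins the subgame inside $u$, where ``winning'' includes ``exiting $u$ to the right in a configuration from which P1 can still force acceptance on the remaining tape.'' That clause makes $\tau(u)$ depend on the suffix of the input, not on $u$ alone, so $\tau$ is not a function of prefixes at all; it cannot induce a right-congruence, and the Myhill--Nerode step collapses. No ``fixed point over re-entries'' of a one-bit-per-state table can repair this, because the value of the game inside $u$ genuinely varies with what the suffix offers at each exit. The missing idea---which is the actual technical content of the result you deferred as ``the hard part''---is to make the profile \emph{parametric}: $\tau(u)$ must assign to each entry state $q$ a \emph{monotone Boolean function} $f_q \colon \{0,1\}^Q \to \{0,1\}$, giving the value of the game inside $u$ as a function of an arbitrary assignment of win/lose values to the possible exit states at position $|u|+1$ (plus one such function for the play started at the left end-marker). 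Monotonicity holds because enlarging the set of winning exit states can only help P1; there are finitely many monotone Boolean functions on $|Q|$ variables, so the profile set is finite; and these profiles do compose, i.e.\ $\tau(u\sigma)$ is determined by $\tau(u)$ and $\sigma$, precisely because all dependence on the outside has been abstracted into the Boolean arguments. Substituting this object for your $\tau$ makes the rest of your outline go through; without it, the stated profile is not well defined and the proof does not stand.
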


\begin{fact} \label{fact:detverzerdeb}
When the verifier is a deterministic constant-space TM, and P0 displays zero information to P1, the class of  debatable languages is $\mathsf{NSPACE(n)}$. \cite{PR79}
\end{fact}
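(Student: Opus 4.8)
The plan is to prove both inclusions, exploiting the fact that zero information for P1 means that P1 is \emph{blind}: since P0 emits no symbol of $\Gamma$ and P1 only ever sees public ($\Gamma$) symbols, P1 receives no information at all during the debate, so a strategy for P1 is nothing but a single string $y \in \Gamma^{*}$ (possibly infinite) fixed in advance as a function of $w$. P0, on the other hand, sees all of $y$ and may tailor its private ($\Delta$) symbols to it. Membership of $w$ in $L$ therefore amounts to the assertion that $\exists y\, \forall \text{P0}$ the deterministic verifier $V$ accepts, and I would organize the proof around this asymmetry.

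For the inclusion of $\mathsf{NSPACE(n)}$ in the debatable class, I would first put the nondeterministic machine into linear-bounded-automaton normal form, so that every configuration on input $w$ is a string of length exactly $n$ and an accepting computation is a sequence $C_0 \# C_1 \# \cdots \# C_m$ of such strings with $m \le 2^{O(n)}$. The blind prover P1 simply writes this computation down; because P1 is committed in advance this costs it nothing. The verifier cannot store or count within a length-$n$ configuration using constant work space, so it borrows the input tape as a ruler: its two-way input head, swept in lockstep with one left-to-right pass over $C_t$ and then over $C_{t+1}$, lets it locate and compare a single column. P0 plays the skeptic: having seen all of $y$, it knows the whole purported computation, and if $y$ is not a genuine accepting computation (which must be the case whenever $w \notin L$) it points, via the timing of a private marker synchronized with P1's stream, at the column and transition where the local rule of the machine is violated. $V$ uses the input head to confirm that P0's two markers really lie in the same column and then checks the three-cell window, rejecting \emph{only} when it positively detects a violation and ignoring any ill-formed or misaligned challenge. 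Hence if $w \in L$ no P0 can force rejection, while if $w \notin L$ the honest P0 always exhibits a real defect, giving exactly the debate asymmetry.

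For the converse inclusion, suppose $L$ is debated by a deterministic constant-space $V$ and a blind P1. Folding $V$'s $O(1)$ control and work tape into a finite set, a configuration of $V$ relevant to a fixed input is a (control, input-head-position) pair, of which there are only $O(n)$; I would compress the deterministic computation between consecutive reads into single transitions. Deciding whether $w \in L$ then becomes deciding whether the blind existential player P1 wins a reachability game on this $O(n)$-state arena against the fully informed adversary P0. The classical tool is the powerset (belief-state) construction: because P1 never observes P0, all P1 can track is the \emph{set} $B$ of configurations $V$ might currently be in, and this set evolves deterministically once P1 commits its next symbol — each live configuration is advanced on P1's symbol, the union over all of P0's possible replies is taken, accepted branches are dropped, and a branch that reaches $q_r$ or never halts keeps $B$ from ever becoming all-accepted. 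P1 wins precisely when it can drive the initial belief state to the all-accepted one. A nondeterministic machine guesses P1's symbols one at a time, maintaining the current belief set as a bit-vector over the $O(n)$ configurations, i.e.\ in $O(n)$ space, and accepts upon reaching the target; since reachability on the implicit $2^{O(n)}$-vertex belief graph is inherently nondeterministic, this places $L$ in $\mathsf{NSPACE(n)}$.

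The two hard points are exactly where the constant-space verifier's weakness must be overcome by structure rather than computation. In the lower bound it is the column-alignment problem: $V$ must compare cells lying $\Theta(n)$ symbols apart in a stream it cannot rewind, which forces the input tape to serve as a reusable ruler and forces the challenge protocol to be robust against a dishonest P0, so that completeness is preserved against arbitrary challenges. In the upper bound it is justifying the belief-state abstraction: one must verify that against a fully informed P0 the blind P1 wins if and only if the existential player wins the powerset reachability game, and that non-halting branches are correctly scored as losses for P1. This is also where the linear rather than logarithmic space bound, and hence $\mathsf{NSPACE(n)}$ rather than $\mathsf{NL}$, becomes unavoidable, since it is the imperfect information (and not the size of $V$'s own configuration space) that makes the belief graph exponentially large.
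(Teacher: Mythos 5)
Your upper-bound direction (debatable $\Rightarrow$ $\mathsf{NSPACE(n)}$) is sound in outline: because the provers and the verifier are deterministic and $V$ is a silent spectator, a blind P1 really is just a string $y$ fixed in advance, and the belief-set over $V$'s $O(n)$ configurations, maintained on the fly while a nondeterministic machine guesses $y$ symbol by symbol, uses $O(n)$ space; the points you wave at (asynchrony of branches between reading states, branches that loop forever, and a K\"{o}nig's-lemma argument showing that ``every branch accepts'' yields a \emph{finite} prefix of $y$ witnessing this) are routine to fill in. Note that the paper itself does not prove this Fact --- it is quoted from \cite{PR79} --- so the natural in-paper reference point is the proof of Theorem \ref{theorem:ZeroInfTime}, which builds the probabilistic analogue of the protocol in your other direction.

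The genuine gap is in your lower-bound direction, and it is precisely the issue that the proof of Theorem \ref{theorem:ZeroInfTime} spends a paragraph neutralizing. You assert that if $y$ is not a genuine accepting computation, P0 can ``point at the column and transition where the local rule of the machine is violated.'' That case analysis is incomplete. A cheating P1 has two ways to avoid ever committing a local violation: it can refuse to end a configuration (never emit another $\#$), or it can emit an infinite sequence $C_0 \,\#\, C_1 \,\#\, C_2 \,\#\, \cdots$ of well-formed, pairwise-consistent configurations along a non-halting branch of $M$ --- and an unnormalized nondeterministic linear-space machine has such branches. The first cheat is curable by $V$ alone: with the input-tape ruler, $V$ can enforce that every configuration has length exactly $n$ and reject otherwise (make this check explicit; it cannot be delegated to P0). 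But in the second cheat there is no violated transition for P0 to mark, no alternative claim that a deterministic constant-space $V$ could verify (it cannot count to $2^{\Theta(n)}$ steps), and so $V$ reads forever and never rejects, violating condition (2) of the paper's definition, which demands actual rejection for $w \notin L$. The repair is the normalization used in Theorem \ref{theorem:ZeroInfTime} (taken from \cite{DS92}): equip $M$ with a robust step counter on a separate work-tape track, incremented on every move \emph{even from configurations unreachable from the initial one}, with $M$ forced to halt when the counter maxes out at $2^{cn}$. This converts the global error ``the computation is too long'' into local transition errors --- any infinite sequence of configurations must then violate the (counter-inclusive) transition relation infinitely often --- and only then does your challenge protocol cover all cheating strategies of P1.
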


\begin{fact}\label{fact:detverpardeb}
When the verifier is a deterministic constant-space TM, and P0 displays partial information to P1, the class of  debatable languages is $\mathsf{TIME(2^{O(n)})}$. \cite{PR79}
\end{fact}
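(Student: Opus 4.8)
The plan is to establish the two inclusions separately, using throughout that a constant-space verifier has only $N = O(n)$ reachable configurations on an input of length $n$: a constant number of control states and work-tape contents, multiplied by the $n+2$ possible positions of the input head. For the containment of the debatable class in $\mathsf{TIME(2^{O(n)})}$, I would regard the (deterministic-verifier) debate as a two-player reachability game of incomplete information in which P1, who tries to drive $V$ into $q_a$, sees only the public symbols from $\Gamma$, while P0 sees the entire play; by the definition of debatability, $w \in L$ holds precisely when P1 has a strategy forcing acceptance. To decide this I would apply the knowledge-set (subset) construction: the positions of an auxiliary \emph{complete}-information game are the sets of verifier configurations consistent with the public communication observed so far by P1, tagged with whose turn it is. P0's private ($\Delta$) symbols are exactly what make one public observation compatible with several real configurations, so each knowledge set is a subset of the $N$ configurations and there are at most $2^{N} = 2^{O(n)}$ of them. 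Since it is standard that observation-based (belief) strategies suffice for the incomplete-information player in reachability games, P1 forces acceptance in the real game iff it can force the knowledge set to become all-accepting in the auxiliary game, and the winning region of this ordinary reachability game is computable by backward induction in time polynomial in its $2^{O(n)}$ positions. Hence $L \in \mathsf{TIME(2^{O(n)})}$.

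For the reverse inclusion, fix a deterministic $M$ deciding $L$ in time $T = 2^{cn}$, so that $w \in L$ iff the initial configuration of $M$ on $w$ reaches an accepting configuration within $T$ steps. I would have the provers carry out a Savitch-style recursive reachability argument inside the debate: to support the claim that one configuration reaches another in $2^{k}$ steps, P1 proposes the midpoint configuration reached after $2^{k-1}$ steps, and P0 then chooses --- \emph{privately} --- whether to continue checking the first or the second half, recursing until, at $k=0$, only a single transition of $M$ remains to be verified. The recursion has depth $\log T = O(n)$. The point of hiding P0's left/right choices is that P1, unable to see which branch will be pursued, is forced to propose midpoints that are mutually consistent, i.e. that trace a genuine computation; were the information complete, P1 could safely lie on whichever branch it knew would be abandoned. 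The configurations involved have length up to $T$, far beyond the verifier's $O(1)$ space, so the base-case one-step check is itself delegated: P0 privately selects an $O(n)$-bit cell position, and the verifier confirms, reading the provers' messages, that the indicated cell of the later configuration follows by $M$'s local rule from the three cells above it. Here $V$ uses its input head, whose $O(n)$ positions index the $O(n)$ bits of an address, to compare positions and check the $\pm 1$ relations one bit at a time, so that $O(1)$ work-tape space suffices.

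The main obstacle is exactly this bookkeeping in the lower bound: the verifier cannot hold the $2^{O(n)}$-size, $O(n)$-bit-addressed configurations that delimit the current subproblem, so the protocol must re-derive and cross-check those endpoints through further nested rounds of private challenges rather than by storing them, and one must prove that P1's one-sided blindness makes all of these cross-checks globally sound. Concretely the design must guarantee that (i) all addresses and cell values can be transmitted and their arithmetic relationships verified bit by bit by a finite-state $V$, and (ii) P1 cannot answer the same latent cell in two contradictory ways across the branches P0 might privately explore. Point (ii) is precisely where the asymmetry of information supplies power that complete information cannot --- with full information P1 could always tailor its answers to the single play it sees, which is why complete-information constant-space debates capture only the regular languages (Fact \ref{fact:detvercompletedeb}). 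Verifying that this construction exactly meets the $2^{O(n)}$ ceiling from the first paragraph is what pins the debatable class down to $\mathsf{TIME(2^{O(n)})}$, in contrast to the $\mathsf{NSPACE(n)}$ that results when P0 is forced to reveal nothing at all (Fact \ref{fact:detverzerdeb}).
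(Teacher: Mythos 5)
The paper never proves this fact itself (it is quoted from \cite{PR79}), but the technique it adapts in Theorem \ref{theorem:partial} reveals the intended route for the hard direction, and your proposal diverges from it exactly where it matters. Your upper bound is essentially sound: a constant-space verifier has $N=O(n)$ configurations, the knowledge-set construction for the imperfectly informed player P1 yields a perfect-information reachability game on $2^{O(n)}$ positions, and backward induction solves it in time $2^{O(n)}$. (One caveat you should state: acceptance must be treated as absorbing, since the target is not observable to P1; with an absorbing target and finite branching, a K\"{o}nig-type argument shows that sure reachability is equivalent to forcing the knowledge set inside the accepting configurations, which is what your criterion assumes.)

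The genuine gap is in the lower bound, $\mathsf{TIME(2^{O(n)})}\subseteq$ the debatable class. Your Savitch-style midpoint recursion on a time-$2^{O(n)}$ deterministic machine forces the debate to handle configurations of length $2^{O(n)}$, so cells carry $O(n)$-bit binary addresses; you flag the resulting ``bookkeeping'' as the main obstacle and then leave it unresolved, but it is not bookkeeping --- it is the step that fails. The verifier must check P0's base-case challenge only when the two privately marked cells lie at \emph{corresponding} addresses; otherwise a dishonest P0 can frame an honest P1 by pairing misaligned cells whose contents look like a violation of $M$'s local rule. A constant-space deterministic $V$ can let P0 \emph{refute} equality of two addresses (exhibit one differing bit, with the bit index, which is at most $O(n)$, verified by the input-head ruler), but it has no way to \emph{confirm} equality of two $O(n)$-bit strings written far apart in the transcript with $O(1)$ memory, and your proposal supplies no mechanism for this. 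The standard proof sidesteps the problem entirely: invoke $\mathsf{TIME(2^{O(n)})}=\mathsf{ASPACE(n)}$ \cite{CKS81} and have the provers debate a \emph{linear-space alternating} machine, with P0's universal choices public and its error claims private, exactly as in Theorem \ref{theorem:partial}. Then every configuration has length $O(n)$, and alignment of challenged positions is verified deterministically by walking the input head right to the mark in one configuration and left to the mark in the next, checking that it returns to the end-marker. The fact that this paper must resort to Freivalds-style coin tossing once configurations exceed length $n$ (the superlinear case of Theorem \ref{theorem:ZeroInfTime}), and lists the extension beyond polynomially bounded space as open, is strong evidence that your exponential-length-configuration route cannot be completed for a deterministic constant-space verifier; you should replace the midpoint recursion with the alternating-machine simulation.
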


Since complete-information debates with probabilistic constant-space verifiers include one-way interactive proof systems as a restricted case, it is clear that they outperform their counterparts with deterministic verifiers, by Facts \ref{fact:onewayIPlower} and \ref{fact:detvercompletedeb}. To our knowledge, this paper is the first examination of potentially asymmetric debate systems with probabilistic verifiers.

We will find it interesting to compare our debate checking model with the \textit{competing-prover interactive proof system} (CIPS) model of \cite{FS89}. The differences between this model and ours are that the verifier in a  CIPS engages in a two-way conversation with P1 and P0, who are not allowed to see each other's messages. Furthermore, one of the CIPS provers (the one who is telling the truth) is allowed to use random coins. A CIPS is allowed to fail to halt with high probability for nonmembers of the language under consideration. The following facts have been proven in \cite{FS89} regarding the CIPS model with constant-space verifiers:
\begin{fact}\label{fact:FSre}
Every recursively enumerable language has a CIPS. This remains true even if the verifier is deterministic.
\end{fact}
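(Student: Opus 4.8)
The plan is to reduce membership in a recursively enumerable language $L$ to the verification of an accepting computation history. I would fix a Turing machine $M$ recognizing $L$, so that $w \in L$ if and only if $M$ has a finite accepting computation on $w$, witnessed by a sequence of configurations $C_0 \# C_1 \# \cdots \# C_m$ in which $C_0$ is the initial configuration on $w$, $C_m$ contains the accepting state, and each $C_{t+1}$ follows from $C_t$ by a single move of $M$. The prover arguing for membership (the truthful one exactly when $w \in L$) would be asked to stream such a history. Because a CIPS is permitted to fail to halt for nonmembers, soundness only demands that the verifier never be tricked into accepting, so the cheating membership prover is free to emit an endless pseudo-history when $w \notin L$; I need only guarantee that the verifier does not accept in that case.

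The verifier's local obligations are easy: it can scan the input against the claimed $C_0$, and it can recognize the accepting state in the final configuration, both in constant space. The genuinely hard obligation is checking \emph{consecutive-configuration consistency}, which is a conjunction, over every step $t$ and every tape position $j$, of a constant-window rule relating $C_{t+1}[j]$ to $C_t[j-1]$, $C_t[j]$, $C_t[j+1]$, together with the requirement that cells away from the head be copied unchanged. A constant-space verifier can store neither a configuration nor a position counter, so on its own it cannot line up position $j$ of $C_t$ with position $j$ of $C_{t+1}$. My plan is to offload this alignment onto the two \emph{non-communicating} provers: have both of them independently commit to the same history, so that the verifier can (i) read the two streams in lockstep, comparing them cell-by-cell in constant space and catching any point of disagreement, and (ii) read one stream with a one-configuration lag relative to the other, which furnishes the cell alignment needed to apply the local transition rule to genuinely adjacent configurations. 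Since neither prover sees the other's messages, a prover that departs from the true history cannot coordinate that departure with its partner, so an inconsistency in either the copying or the transition becomes detectable. When $w \in L$, both truthful provers present the same flawless history and the verifier accepts; when $w \notin L$, the truthful refuter uses its private coins to drive the challenge locating the unavoidable flaw in the cheater's history, and because the cheater cannot anticipate those coins, the flaw is exposed with high probability. A deterministic verifier suffices precisely because all randomness is supplied by whichever prover is truthful, and non-communication keeps it hidden from the adversary.

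The step I expect to be the main obstacle is maintaining the lockstep and lagged readings in constant space as configurations lengthen and the history grows without bound. Naive uniform sampling of a position fails on an unbounded history, so the challenge must be engineered so that a single hidden inconsistency is caught with a constant bias no matter where the cheating prover buries it, while the lagged schedule must stay aligned even though $M$'s tape can expand by one cell per move. Arranging the interleaving and the coin-driven challenge so that coordination between the two separated provers is genuinely impossible to fake is the crux of the argument and the part I would expect to demand the most care.
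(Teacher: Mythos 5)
The statement you are asked to prove is one the paper itself does not prove: it is quoted as a known result from \cite{FS89}, and the only in-paper material related to its proof is the machinery the authors borrow from that source (the transition-error-claim protocol of Theorem \ref{theorem:ZeroInfTime} and the coin-announcing restart technique reused in Theorem 3). So your proposal can only be judged on its own terms. Its high-level ingredients are reasonable and in the spirit of that machinery: reduce to checking an accepting computation history of $M$ on $w$, exploit the fact that the provers cannot see each other's messages, and let the truthful prover supply the randomness so that the verifier itself can be deterministic. But the proposal has a genuine gap, and it sits exactly where you yourself locate ``the crux.'' The fatal point is what the verifier does when the two streams disagree. With a deterministic verifier and your symmetric cross-checking, no rule works: if the verifier withholds acceptance upon detecting an inconsistency, then for $w \in L$ the cheating P0 destroys completeness simply by deviating from the true history; if the verifier accepts upon inconsistency, then for $w \notin L$ the cheating P1 destroys soundness the same way. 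Note also the conceptual slip in your completeness argument: you write that ``when $w \in L$, both truthful provers present the same flawless history,'' but in a CIPS exactly one prover is ever truthful --- when $w \in L$, P0 is the adversary and will certainly \emph{not} cooperate by streaming the correct history. Breaking this accept/reject symmetry is precisely the job of the truthful prover's hidden coins, and ``drive the challenge locating the unavoidable flaw'' is not a construction: you must specify what bits the truthful prover sends, how a verifier that cannot tell which prover is truthful safely uses bits from both, why a single flaw buried at an arbitrary depth in an unbounded stream is still caught with constant probability, and why nothing the verifier does (recall it converses with both provers two-way) leaks those bits to the cheater.

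There is also a concrete implementability problem in your step (i)+(ii). A constant-space verifier reading two one-way symbol streams cannot simultaneously compare them in lockstep \emph{and} with a one-configuration lag: the streams cannot be rewound, and buffering a whole configuration to realize the second alignment is not constant space (configurations grow without bound). So the equality check and the transition check must be split --- for instance, selected by randomness hidden from the cheater --- which sends you straight back to the unresolved issue above of whose randomness selects it and why the cheater cannot exploit the selection. These two points are not finishing details; together they are essentially the whole content of the theorem, and the proposal as written does not establish it.
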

\begin{fact}\label{fact:FSrecursive}
Every recursive language has a CIPS with halting probability 1.
\end{fact}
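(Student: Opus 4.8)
The plan is to reduce the membership question to checking a single, finite, deterministic computation, and then to let the two competing provers race to present it. Since $L$ is recursive, fix a deterministic Turing machine $M$ that halts on every input and accepts exactly $L$. On input $w$ the machine $M$ has one and only one computation $C_0, C_1, \ldots, C_T$: the start configuration $C_0$ is fixed by $w$, each $C_{i+1}$ is forced from $C_i$ by $M$'s transition rule, the halting time $T$ is finite because $M$ halts, and $w \in L$ if and only if $C_T$ is accepting. Exactly one of the provers is therefore telling the truth about the outcome, and the verifier only has to decide which of the two claimed computations is the genuine one.

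The verifier I would design walks the two claimed histories forward in lockstep, one configuration per round, so that it never has to record a step index: in each round it asks each prover for its next configuration and keeps only a constant amount of state. It performs three local tests. It checks that each prover's first configuration is really the start configuration for $w$, by a constant-space scan against the input tape; it checks, at every round, that the claimed $C_{i+1}$ follows from the claimed $C_i$ under $M$'s transition rule (a rule that also dictates exactly when a halting configuration may be declared); and it checks that each prover's declared halting verdict matches the claim it is defending. Determinism does the rest: two provers that both begin at the correct $C_0$ and both respect the transition rule at every step must generate identical histories and hence agree on the verdict, so whenever the provers disagree about membership at least one of them must either start incorrectly or make an illegal move, and the verifier's task collapses to exposing that single fault.

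The main obstacle is the per-step consistency test. A constant-space verifier cannot hold a configuration in memory, so it cannot simply compare the long strings $C_i$ and $C_{i+1}$ to confirm that they coincide everywhere outside the constant-size window around the head where $M$ acts. The natural remedy is to have the two configurations streamed interleaved, cell by cell, turning ``equal away from the head, correct at the head'' into a purely local predicate; the danger is a cheating prover that misaligns the two streams, inserting or deleting cells to hide a forbidden change. This is exactly where the model's randomness, carried by the honest prover (recall that in a CIPS only the truthful prover tosses coins, and the two provers cannot see each other's messages), becomes essential: the honest prover randomizes, invisibly to its opponent, which cell or which transition is placed under scrutiny, so that any genuine misalignment or illegal transition is caught with probability bounded away from zero while the honest party is never faulted. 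Repeating the scrutiny drives the error below any desired $\varepsilon < \tfrac12$.

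Halting with probability $1$ then comes for free, and this is precisely the point that distinguishes the statement from Fact \ref{fact:FSre}. Because $M$ halts on every input, both the accepting and the rejecting computations are finite, so the lockstep walk is guaranteed to uncover a halting configuration or a fault within $T$ rounds, and the verifier always decides. For a merely recursively enumerable language the honest computation for a nonmember would be infinite, the walk would never terminate, and one could only secure the weaker non-halting guarantee of Fact \ref{fact:FSre}.
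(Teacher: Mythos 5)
Your proposal attacks a statement that the paper itself does not prove: Fact~\ref{fact:FSrecursive} is quoted from \cite{FS89}, and the closest argument inside the paper is the proof of its final theorem (the CIPS variant with deterministic provers), which combines the transition-error machinery of Theorem~\ref{theorem:ZeroInfTime} with a coin-flip restart trick borrowed from \cite{FS89}. Your high-level skeleton --- fix a halting deterministic $M$, exploit uniqueness of its computation, and have the competing provers stream it while the honest prover's randomness is hidden from the cheater --- is the right starting point and matches the spirit of those constructions. The problem is that all of the technical content of such a proof lives in the one step you dispose of by assertion: the constant-space, misalignment-robust check that one streamed configuration follows from the previous one. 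You correctly name the obstacle, but then claim the honest prover can randomize ``which cell or which transition is placed under scrutiny'' so that cheating is caught while the honest party is never faulted, with no mechanism for how a verifier that cannot count locates the \emph{corresponding} cell one configuration later, and no argument for why a cheater who inserts or deletes cells cannot defeat whatever alignment convention you adopt. This is exactly what the Freivalds-style accept-coin/control-coin comparison in the proof of Theorem~\ref{theorem:ZeroInfTime} is for, including its one-sided error-ratio analysis showing that a wrong halting decision is exponentially less likely than a right one; without that (or an equivalent), there is no proof.

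The second gap is the halting argument, and it contradicts your own description of the consistency check. If faults are caught only ``with probability bounded away from zero,'' then in a single lockstep pass the fault may well go uncaught, so it is false that the verifier ``always decides'' within $T$ rounds: at the end of an inconclusive pass it must either guess (unbounded error) or have the provers replay the computation (so it does not always decide). Once repetition enters, halting with probability $1$ needs its own argument --- that in each trial the verifier halts with probability bounded away from zero for the fixed input, and that when it does halt its decision is overwhelmingly likely correct so that error does not accumulate over unboundedly many trials. This is precisely the role of the public-coin restart technique of \cite{FS89} as reused in the paper's last theorem (runs of consecutive ``continue'' outcomes long enough to contain a full computation, each giving a small but correct-with-high-probability chance of halting), together with the per-trial correctness ratio. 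Your closing remark about why recursiveness rather than recursive enumerability is what enables halting probability $1$ is correct in spirit, but the argument as written does not establish the fact.
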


\section{Results}
\label{sec:results}
We start by showing that constant-space probabilistic verifiers for zero-information debates are more powerful than their deterministic counterparts. Note the contrast between the following theorem and Fact \ref{fact:detverzerdeb}.

\begin{theorem}\label{theorem:ZeroInfTime} 
For any $k\geq 1$, and any space constructible bound $s(n)$ such that $s(n)=O(n^k)$,
\begin{equation*} 
\mathsf{NSPACE(s(n))} \subseteq \mathsf{ZDEB(1,2^{s(n)})}.
\end{equation*} 
\end{theorem}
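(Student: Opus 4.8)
The plan is to have P1 commit to an alleged accepting computation of the nondeterministic $O(s(n))$-space machine $N$ for $L$, and to use the informationally advantaged prover P0 as a challenger that pinpoints any flaw for the constant-space verifier. Fix a canonical encoding in which every configuration of $N$ on an input of length $n$ is a string of a common length $\ell=\Theta(s(n))$ over a combined tape/state alphabet with the head position marked; since $s(n)$ is space constructible this common length is well defined. An accepting computation is then a string $C_0\#C_1\#\cdots\#C_m$ with $m\le 2^{O(s(n))}$, in which $C_0$ is the initial configuration, $C_m$ contains the accepting state, and each $C_{t+1}$ follows from $C_t$ under the transition relation of $N$. Because P1 is completely blind (P0 emits only private symbols of $\Delta$, so this is genuinely a zero-information debate), P1's message is just this fixed string, depending only on $w$: for $w\in L$ such a string exists, and for $w\notin L$ no string of this form is a legal accepting computation.

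First I would let the verifier read P1's stream once from left to right, checking the purely local conditions that it parses as equal-length blocks separated by $\#$, that the first block is the initial configuration, and that the last block contains the accepting state. The only non-local part of the initial-configuration check, namely that the work tape is blank and of the canonical length $\ell$, together with the requirement that all blocks have equal length, can be folded into the same comparison machinery used for transitions and anchored to $\ell$ through the space constructibility of $s(n)$. The substantive difficulty is verifying that two consecutive blocks represent a legal step, since two adjacent configurations of size $\Theta(s(n))$ cannot be buffered by a constant-space verifier.

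Here the challenger does the work. Two adjacent configurations differ only in a constant-size window around the head, so an illegal step has at least one position $p$ at which $C_{t+1}[p]$ is inconsistent with the neighborhood of $C_t$ at $p$. The prover P0, who has seen all of P1's symbols, marks on its private track a single transition index $t$ and a single position $p$ inside both $C_t$ and $C_{t+1}$, synchronized with the verifier's reading of P1's stream. The verifier reads the constant-size neighborhoods around the two marked cells, compares them against the transition relation of $N$, and rejects if they witness an illegal step. To stop P0 from cheating by marking two different positions in the two blocks, the verifier must certify that the two marks sit at equal offsets from their block boundaries; I would have P0 also emit these two offsets in unary on its private track and let the verifier run a probabilistic equality test on the two unary counts. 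This counter comparison is the one genuinely randomized ingredient: a constant-space probabilistic machine can test equality of two unary counts with bounded error, and for counts up to $\ell=\Theta(s(n))$ this costs time exponential in $s(n)$, which is exactly what the $2^{s(n)}$-time budget of $\mathsf{ZDEB(1,2^{s(n)})}$ affords and is the reason the construction reaches polynomially bounded space rather than only the linear bound of the deterministic analogue in Fact~\ref{fact:detverzerdeb}.

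Completeness and soundness then decouple cleanly. For $w\in L$, P1 writes a genuine accepting computation: against any P0, either the two marked cells sit at equal offsets, in which case the marked step is legal and nothing is witnessed, or they sit at unequal offsets, in which case the probabilistic counter test declines to certify a match; either way the verifier accepts with high probability, satisfying clause (1) of the definition. For $w\notin L$, P1's string is not a legal accepting computation, so it fails a local check or contains an illegal step; the honest P0 of clause (2) marks a genuinely inconsistent cell at equal offsets, the counter test certifies the match, and the verifier rejects with high probability. I expect the main obstacle to be the probabilistic counter-comparison subroutine: engineering it to run in constant space with the correct one- or two-sided error behaviour, amplifying so that the accumulated error over all checks stays below $\tfrac{1}{2}$, and confirming that the navigation P0 performs, locating block $t$, the two marks, and the unary offsets, is realizable on the private track within the claimed expected-time bound.
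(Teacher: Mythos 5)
Your overall architecture (P1 streams an alleged accepting computation, the informed P0 privately marks a faulty transition, and the constant-space verifier uses a Freivalds-style randomized comparison to make sure P0's two marks are at equal offsets) is the same as the paper's. But there is a fatal gap in the one ingredient you delegate to "the probabilistic counter-comparison subroutine": no such one-shot subroutine exists. In your protocol P1 writes its string \emph{once} and P0 emits the two unary offsets \emph{once}; the verifier reads prover symbols from reading cells, which are one-way streams that cannot be rewound. A constant-space machine making a single pass over two unary counts of length up to $\ell=\Theta(s(n))$ cannot decide their equality with bounded error --- Freivalds' exponential-time equality test is inherently a \emph{two-way} (repetition-based) technique, and a one-pass bounded-error constant-space acceptor recognizes only regular languages. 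Concretely, in one pass the "decisive" coin events have probability $2^{-\Theta(\ell)}$, so with probability $1-2^{-\Theta(\ell)}$ your test is inconclusive; whichever default action the verifier then takes destroys either completeness or soundness. The paper resolves exactly this by changing the debate format: P1 must repeat its claimed computation path in infinitely many rounds, each transition-error claim is processed with "accept coins" and "control coins" whose firing probabilities are both exponentially small but whose \emph{ratio} is controlled (correct outcome at least $2^{l-1}$ or $2^{l-2}$ times more likely than the incorrect one per claim), and the decision is the outcome of the first event that fires over the infinitely many retries. The repetition is what converts exponentially unlikely per-round events into halting with probability 1 and bounded overall error; your one-shot format has no mechanism to supply it.

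A second, related omission: for $w\notin L$ nothing in your protocol stops a cheating P1 from streaming an endless "configuration" (never emitting a block delimiter), or an endless sequence of locally valid configurations that enters a loop after a single transition error. In both cases your verifier reads forever and never rejects, violating clause (2) of the definition and the expected-time bound; note that the class $\mathsf{ZDEB}(1,2^{s(n)})$ in this theorem does \emph{not} use the relaxed, non-halting-allowed acceptance condition. The paper needs two dedicated devices for this: P0's $\infty$ claim together with its own checking procedure (a ruler on the input tape when $s(n)=O(n)$, a coin-flipping loop otherwise), and the modification of $M$ to carry a robust step counter that forces it to halt within $2^{cs(n)}$ steps, which guarantees that any infinite purported run contains infinitely many transition errors for P0 to report, so the rejection probability accumulates to 1. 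Your proposal would need both of these, plus the round structure above, to go through; as written, soundness fails outright against a non-halting P1, and the equality test fails against any P1 at all.
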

\begin{proof}
Let $L$ be a  language recognized by an $s(n)$-space nondeterministic TM $M$ with a single work tape and separate read-only input tape. As in \cite{DS92} (Theorem 3.12, p. 817), we assume without loss of generality that $M$ keeps a counter in a separate track of its work tape, and that this counter is incremented after each action performed on the first track, even for configurations that are not reachable from the initial configuration. $M$ halts when this counter reaches $2^{cs(n)}$, where $c$ is a constant. Let $Q_M$ denote the state set of $M$.

We will describe a debate system with verifier $V$ for $L$. Prover P1 attempts to convince $V$ that the input string $w$ is in $L$ by repeatedly presenting what it claims to be an accepting computation path (ACP) of $M$ on $w$. An ACP is a sequence of configurations of $M$ on $w$ that starts with the initial configuration, ends with an accepting configuration, and has the property that every pair of adjacent configurations in it obeys $M$'s nondeterministic transition relation. Each configuration in the ACP is preceded by the symbol $\$$, which we assume is not a member of $M$'s work tape alphabet or state set. Configurations are strings of the form $u q_d x$, representing the work tape and current state of $M$, where $u$ and $x$ are strings over the work tape alphabet of $M$, and $q_d$ is a so-called \textit{state/direction symbol} whose position is used for representing the work tape head position. The value of $q_d$ in the $i$th configuration in an ACP indicates both the state of $M$ in that configuration, and the direction in $\lozenge$ traveled by the input head of $M$ during the transition from the ($i$-1)th configuration, for all $i>1$. The initial configuration consists of the single state/direction symbol corresponding to the start state of $M$, and head direction 0. The input head position is not represented, since $V$ will use its own input head to track this head. P1 is supposed to write  the separator symbol $\#$ between every two adjacent ACP in its messages.

P0's task is to show that something is wrong with the computation paths sent by P1. Since $V$ can check that a purported ACP "begins right" (i.e. with the initial configuration), and "ends right" (i.e. contains the accept state symbol in its last configuration) on its own, P0 attempts to convince $V$ that P1 is making a transition error between two configurations, or that P1 is trying to make $V$ run forever by reading an endless "configuration." In the following, these two types counterargument are called the \textit{transition error} and \textit{infinity} (or $\infty$) claims, respectively. Of course, $V$ should also be wary of the possibility that P0 may be trying to trick it to falsely reject a valid transition, or to cause it to enter a loop.

P1 and P0 talk symbol by symbol, and $V$ goes through the sequences of symbols that they emit in a parallel fashion. P0's alphabet ($\Delta$ in the definition in the previous section) is $\{0, \varsigma, \tau, \upsilon , \infty\}$. P0 writes 0 to match each symbol written by P1 until it comes to a point where it wishes to indicate an error in P1's sequence, as follows.

If P0 wishes to claim that P1 is making a transition error between two adjacent configurations, say, $\alpha$ and $\beta$, it writes $\varsigma$ in the position corresponding to the $\$$ preceding  $\alpha$, and then indicates the positions within these configurations whose examination would lead to the discovery of the error by the symbols $\tau$ and $\upsilon$, emitting 0's for the other positions.

For example, if P0 is claiming that the invalidity of the transition from $\alpha$ to $\beta$ can be detected by comparing the $j^{th}$, ($j$+1)$^{st}$ and ($j$+2)$^{nd}$ symbols of $\alpha$ with the $k^{th}$, ($k$+1)$^{st}$ and ($k$+2)$^{nd}$ symbols of $\beta$, the sequences produced by P1 and P0 will look as in Figure \ref{fig:messageExample}, where $\alpha_i$ and $\beta_i$ stand for the $i^{th}$ characters of configurations $\alpha$ and $\beta$, respectively. Of course, if P0 is honest, $j=k$, and dishonest P0's may well trick careless verifers to reject the input by giving these signals for carefully selected unequal $j$ and $k$ values.

\begin{figure}
\centering
\boxed{
\begin{array}{l}
P1 : ...\$\alpha_1\alpha_2...\alpha_j...\$\beta_1\beta_2...\beta_k...\$... \\
P0 : ...\varsigma\hspace{1 mm}0\hspace{1.7 mm}0\hspace{1.5 mm}...\hspace{0.5 mm}\tau\hspace{1 mm}...\hspace{0.5 mm}0\hspace{1 mm}0\hspace{1.5 mm}0\hspace{1 mm}...\hspace{0.5 mm}\upsilon\hspace{1 mm}...\hspace{0.3 mm}0...
\end{array}}
\caption{P0 claims a transition error between configurations $\alpha$ and $\beta$.}
\label{fig:messageExample}
\end{figure}

At this point, some readers may be worrying about how P0 "knows" that P1 will make a transition error at that particular pair of configurations even before P1 has written the first character of the first configuration in the pair. This is not a problem, since our definition of debate systems just requires such a P0 function to \textit{exist} for any particular P1 that makes an erroneous claim, and no P0 function should exist for an honest P1 giving a correct ACP. For any sequence of symbols that can be produced by a deterministic P1 containing a transition error, there is of course such a response sequence associated to a deterministic P0.

If P0 wishes to claim that P1 is trying to make $V$ run forever by giving an infinite-length "configuration" $\eta$,  it does not use the method above to indicate the transition error between the preceding configuration and $\eta$. Instead, P0 writes the symbol $\infty $ in response to the $\$$ preceding the infinite-length configuration. 

We now describe the verifier $V$. Let us call each segment of the debate delimited by P1 printing $\#$'s a \textit{round}. $V$ accepts immediately if P0 does not report an error of P1 (by either one of the two methods described above) in any single round,\footnote{Note that the last configuration presented in such a round must be accepting, since we assume that P1 must be honest if P0 is violating the protocol.} or if it notices similar easily detectable "syntax errors" by P0. $V$ rejects immediately if it sees P1 starting any round with an incorrect initial configuration, or if it detects P1 committing simple syntax errors such as printing a configuration without exactly one state/direction symbol.

$V$ performs a partial check of every transition between adjacent configuration pairs presented by P1 by just focusing on the three-symbol "window" that has a state/direction symbol in its middle in each configuration. If these two windows are consistent with the transition function of $M$ and the presently scanned input symbol, $V$ moves its input head (which is tracing the movement of the input head of $M$) in the specified direction on the tape, and continues execution. Otherwise, it rejects.

If P0 is reporting a transition error, $V$ tries to make sure that P0 is indeed pointing out to corresponding positions in the two successive configurations. 
$V$ compares these values by using a modified version of a technique by Freivalds \cite{Fr81}. Figure \ref{fig:messageExample} may be helpful during the following exposition, which refers to a number $l$ whose value will be fixed according to the desired error bound, as will be explained later.

When it sees that P0 has responded by $\varsigma$ to the configuration delimiter $\$$ from P1, $V$ starts flipping $4l$ coins on each of the first $j$ symbols of configuration $\alpha$. $V$ flips $l$ additional coins on the $j$th symbol corresponding to $\tau$. These $4lj+l$ coins will be called \emph{the first set of accept coins}. When scanning the relevant portion of the messages of P1, $V$ stores the symbols $\alpha_j$, $\alpha_{j+1}$, and $\alpha_{j+2}$ in its memory for use in a possible comparison. For the next configuration $\beta$, $V$ similarly flips $4lk+l$ coins that we name \emph{the second set of accept coins}. Parallelly to the tossing of the accept coins, $V$ flips another set of $2l(j+k)$ coins, which will be called \emph{the control coins}. 

After reading the $\upsilon$, $V$ decides whether to accept, control or continue:
\begin{itemize}
\item If one or both of the sets of accept coins contains all 0's, $V$ accepts. 
\item If both sets of accept coins contain at least one 1, and the set of control coins contains all 0's, $V$ proceeds to check if the two triples ($\alpha_j$, $\alpha_{j+1}$, $\alpha_{j+2}$) and ($\beta_k$, $\beta_{k+1}$, and $\beta_{k+2}$) provide evidence for an invalid transition of $M$. If the two triples are among those that can appear when two configurations in a legitimate run of $M$ are properly aligned, $V$ accepts. Otherwise, it rejects.
\item If the set of control coins, as well as both  sets of accept coins, all contain at least one 1, $V$ continues execution without comparing the triples, discarding all information it collected for this purpose.
\end{itemize} 

If $V$ arrives at the end of a round, it rejects if the final configuration is not accepting. Otherwise, it continues to the next round.

The only case where $V$ gives up simulating the input head of $M$ is when P0 writes $\infty$ in response to some $\$$ of P1, claiming that P1 will be emitting an infinite sequence of non-$\$$ symbols from this point on.
\begin{itemize}
\item If $s(n)=O(n)$,  $V$ inspects this claim by checking if the next $mn$ symbols emitted by P1 contains a $\$$ for some constant $m$ such that $mn > s(n)$ for all input lengths $n>0$. (The empty input can be handled separately.) This check can be performed by using $V$'s input tape as a ruler. $V$ accepts  if it sees a $\$$ in this scan. Otherwise, it rejects. 
\item If $s(n)$ is superlinear, $V$ enters a loop where each iteration involves reading $n$ symbols from P1, and flipping $rn$ coins. $V$ accepts if it sees any $\$$ during this process. $V$ rejects if all of the $rn$ outcomes for some iteration turn out to be 0.
\end{itemize}

This concludes the description of $V$. We now show that this constitutes a bounded-error debate system for $L$.
Let us establish first that $V$ halts with probability 1. When $w \in L$, in which case P1 will be repeating a valid ACP in an infinite loop, this is evident from the facts that $V$ terminates directly if P0 does not claim any error, the procedure associated with the P0 signal $\infty$ halts with probability 1, and each processing of the transition error claim involves a small but nonzero probability of halting. When $w \notin L$, P1 could hope to avoid the $\infty$ signal by not giving an infinitely long "configuration", but by making a single transition error to an infinite loop of $M$'s computation, thereby keeping the halting probability low. But this is not possible because of the way we defined $M$ in the beginning: Since we equipped $M$ with a robust counter that causes $M$ to halt when it maxes out, any infinite sequence of configurations purporting to describe a run of $M$ must contain not one, but infinitely many transition errors, which will of course be reported by P0, and whose combined processing will lead the halting probability to accumulate to 1.

We will now analyze the probabilistic procedures for the transition error and $\infty$ claims, and show that the probability that they lead to an incorrect decision is much lower than the probability of a correct decision. Consider Figure \ref{fig:messageExample} again. Let $A$ denote  the event of acceptance as the result of the processing of a single transition error claim. Let $T$ denote the event that the two triples from the two successive configurations are indeed tested for consistency during this procedure. According to the algorithm described above, 
\begin{equation}\label{eq:prA}
Pr[\text{A}] = 2^{-4lj-l} + (1- 2^{-4lj-l})2^{-4lk-l},
\end{equation}
and
\begin{equation}\label{eq:prT}
Pr[\text{T}] = (1-Pr[\text{A}]) 2^{-2l\left(j+k\right)}.
\end{equation}

If $w\in L$, P0 will make sure that $j\neq k$ during transition error claims. In this case,
\begin{align*}
Pr[\text{A$|$} j \neq k] > 2^{-l\left(4j+1\right)}+2^{-l\left(4k+1\right)-1} 
> 2^{-l\left(4j+1\right)-1}+2^{-l\left(4k+1\right)-1}.
\end{align*}
\noindent Since $j \neq k$, the value of one of $j$ and $k$ is at most $\frac{j+k-1}{2}$. Without loss of generality, we  assume $j$ to have  the smaller value, and we get $2^{-l\left(4j+1\right)-1} \geq 2^{-l\left(4\frac{j+k-1}{2}+1\right)-1} = 2^{-l\left(2j+2k-1\right)-1}$. Then, 
\begin{equation*}
Pr[\text{A$|$} j \neq k] > 2^{-l\left(2j+2k-1\right)-1}. 
\end{equation*}

The probability that $V$ will be tricked to test the incorrectly aligned triples indicated by P0, thereby erroneously rejecting the input, is
\begin{equation*}
Pr[\text{T$|$} j \neq k]  < 2^{-l\left(2j+2k\right)}.
\end{equation*}
Therefore, 
\begin{equation*}
Pr[\text{A$|$} j \neq k] > 2^{l-1} Pr[\text{T$|$} j \neq k],
\end{equation*}
\noindent that is, each incorrect claim about a transition error leads to an acceptance with probability at least $2^{l-1}$ times greater than the probability that it leads to a rejection.

If $w \notin L$, P1 should make transition errors to end its rounds with accepting configurations, and P0 raises the transition error claim with $j=k$. In this case, it is easy to see that $V$ accepts with probability 
\begin{equation*}
Pr[\text{A$|$ } j = k]  < 2^{-4lj-l+1},
\end{equation*}
and tests the transition among the two triples with probability
\begin{equation*}
Pr[\text{T $|$ } j = k]  > 2^{-4lj-1},
\end{equation*}
leading us to conclude
\begin{equation*}
Pr[\text{T$|$} j = k] > 2^{l-2} Pr[\text{A$|$ } j = k].
\end{equation*}
Therefore, each correct claim about a transition error leads to rejection with probability at least $2^{l-2}$ times greater than the probability of an incorrect acceptance. 

As for the infinity claim, the only way $V$ can reach an incorrect decision in this regard is the possibility of rejecting a proper ACP when $s(n)$ is superlinear. The control loop in the $\infty$-checking procedure described above can complete at most $\lfloor \frac{s(n)}{n} \rfloor$ iterations,\footnote{Since P1 is truthful in this case, a $\$$ will be encountered after at most $s(n)$ P1 symbols are scanned.} and if $V$ is unlucky enough to obtain "all 0's" for a set of $rn$ coin flips in any one of these iterations, it will reject incorrectly. The probability of this is
\begin{equation*}
\sum^{\lfloor \frac{s(n)}{n} \rfloor}_{i=1}{(1-2^{-rn})^{i-1} 2^{-rn}} < s(n)2^{-rn},
\end{equation*} 
which is exponentially small, since $s(n)$ is polynomially bounded. 

It is clear that any error bound $\epsilon < \frac{1}{2}$ can be achieved by tuning the constants $l$ and $r$ in these procedures.

Finally, we analyze the time complexity. Since $j$ and $k$ are at most $s(n)$ in Equations \ref{eq:prA} and \ref{eq:prT}, the probability of halting in any single processing of a transition error claim is $2^{-O(s(n))}$. There can be at most $2^{O(s(n))}$ configuration descriptions of length $O(s(n))$ between any two adjacent transition error claims. The $\infty$ check takes linear time for $s(n)=O(n)$, and $2^{O(rn)}$ time for superlinear $s(n)$. We conclude that $V$ halts in expected time $2^{O(s(n))}$ in all cases. 
\end{proof} 

Note that the proof of Theorem \ref{theorem:ZeroInfTime} depends crucially on the inability of P1 to see P0's messages. If P1 could see P0, it could detect when P0 claimed a transition error is about to start, and present a correct transition at that point, branding P0 a liar. It is also interesting to note that, since P1 is deterministic, P0 is in fact not required to see the output of P1 either. 

In the next theorem, we examine the power of asymmetric debates by allowing P0 to display part (but not all) of its messages to P1.

\begin{theorem}\label{theorem:partial}
For any $k\geq 1$, and any space constructible bound $s(n)$ such that $s(n)=O(n^k)$,
\begin{equation*} 
\mathsf{TIME(2^{s(n)})} \subseteq \mathsf{PDEB(1,2^{s(n)})}.
\end{equation*} 
\end{theorem}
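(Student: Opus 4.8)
The plan is to adapt the machinery of Theorem \ref{theorem:ZeroInfTime} to simulate a deterministic $2^{s(n)}$-time Turing machine $M$ rather than an $s(n)$-space nondeterministic one. Let $L\in\mathsf{TIME(2^{s(n)})}$ be recognized by a single-tape deterministic TM $M$ running in time $2^{cs(n)}$. The fundamental difficulty is that a configuration of $M$ can now be as long as $2^{cs(n)}$, which is far too large for a constant-space verifier to track by walking across it with its own input head (the trick used in the zero-information case, where configurations had length $O(s(n))$ and fit on the input tape as a ruler). So P1 can no longer simply rewrite full configurations and let $V$ verify transitions locally by head-tracking; instead I would have P1 present the computation as a sequence of configurations but arrange matters so that $V$ need only verify a \emph{single, randomly chosen} cell of each configuration against the corresponding cells of its predecessor, exploiting the locality of Turing machine transitions (each cell's next value depends only on itself and its two neighbors, together with the head position and state).

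First I would set up the debate so that P1 writes successive configurations of $M$, each of length $2^{cs(n)}$, separated by delimiters, and P0's job is again to point to a transition error. The new ingredient, and the reason partial information (rather than zero information) is needed, is a position-matching subprotocol: since $V$ cannot itself count up to $2^{cs(n)}$ to confirm that two marked cells in adjacent configurations are vertically aligned, I would have P0 publicly announce (using the public alphabet $\Gamma$) an index or a ``checkpoint'' that forces P1 to commit to emitting the cell contents at a specified position in every configuration, so that $V$ can compare a constant number of symbols across one round. Here the asymmetry is used in the opposite direction from the intuition of the zero-information proof: P0 must reveal \emph{some} information (the position being challenged) to P1 publicly, so that P1 is committed to a consistent addressing scheme, while keeping the rest of its strategy ($\Delta$-symbols) hidden. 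The alignment of the two challenged cells would again be enforced by a Freivalds-style coin-flipping comparison as in Theorem \ref{theorem:ZeroInfTime}, now applied to positions that can be as large as $2^{cs(n)}$, which is why the number of coins and hence the running time blow up to $2^{O(s(n))}$ rather than staying polynomial in $s(n)$.

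The key steps, in order, are: (i) reduce $L$ to simulating a deterministic $M$ with a robust step counter that forces every faulty infinite pseudo-computation to contain infinitely many transition errors, exactly as in the previous proof, so that $V$ halts with probability $1$; (ii) design the public/private message format so that P0 can both name a challenged configuration boundary and, via public symbols, pin down the horizontal position $p$ at which P1 allegedly commits a transition error; (iii) run the Freivalds-type comparison to confirm with overwhelming probability that the cell P1 emits at position $p$ in configuration $\alpha$ is aligned with the cell at position $p$ in configuration $\beta$, and then check the constant-size window against $M$'s transition function; (iv) carry out an $\infty$-check for endless configurations analogous to the superlinear case of Theorem \ref{theorem:ZeroInfTime}; and (v) tune the coin-flip constants to drive the one-sided error below any $\varepsilon<\tfrac12$ and verify that the expected running time is $2^{O(s(n))}$. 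The main obstacle I anticipate is step (ii)–(iii): making precise a position-challenge mechanism that (a) a constant-space verifier can actually follow, (b) an honest P1 can always satisfy when $w\in L$ against \emph{every} P0, and (c) leaves a dishonest P1 no way to evade detection when $w\notin L$, all while respecting the constraint that P1 sees only the $\Gamma$-portion of P0's messages. Getting the ``begins right / ends right / locally consistent everywhere'' invariant to compose correctly with a single randomly-tested alignment per round — so that a genuine transition error is caught with probability bounded away from the probability of a spurious rejection — is where the real work lies, and it is exactly the place where allowing P0 partial (rather than zero) visibility to P1 is essential.
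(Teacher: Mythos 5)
Your approach diverges from the paper's and contains a genuine gap that, as far as I can see, cannot be repaired within your framework. The paper never lets configurations grow to length $2^{cs(n)}$: it invokes the equality $\mathsf{TIME(2^{s(n)})}=\mathsf{ASPACE(s(n))}$ of \cite{CKS81}, replaces the time-bounded deterministic machine by an $s(n)$-space-bounded \emph{alternating} machine $M$, and reuses the protocol of Theorem \ref{theorem:ZeroInfTime} essentially verbatim on configurations that are still only $O(s(n))$ long. The partial information is then used in the natural direction: P0 publicly announces (via two prespecified symbols of $\Gamma$) its choices at the universal branchings, P1 answers with the next two $O(s(n))$-length configurations realizing its existential choices, and all of P0's error claims stay private in $\Delta$, exactly as in the zero-information protocol. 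With that reduction, the error and time analyses of Theorem \ref{theorem:ZeroInfTime} carry over with no modification.

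Your plan fails at two specific points. First, the public position-challenge in your steps (ii)--(iii) is unsound because of the quantifier structure of the debate definition: P1 is a deterministic function of the public transcript, so it sees P0's announced checkpoint $p$ before emitting its configurations and can simply place its transition error at some position other than $p$; a P1 that plays ``error at position $p+1$'' defeats every possible public challenge, so no prover $\mathrm{P0_\Pi}$ satisfying condition (2) exists for it. The invisibility of P0's claims to P1 is precisely what makes Theorem \ref{theorem:ZeroInfTime} work, and your mechanism gives that up. Second, if you instead keep the challenged position private and run the Freivalds comparison over positions as large as $2^{cs(n)}$, the probability that a single claim is decisively tested is $2^{-\Theta(l(j+k))}=2^{-\Theta\left(2^{s(n)}\right)}$, so the expected number of claims the verifier processes before halting is doubly exponential in $s(n)$, and the total expected time becomes $2^{2^{\Theta(s(n))}}$ rather than the $2^{O(s(n))}$ required by $\mathsf{PDEB(1,2^{s(n)})}$. (Your remark that ``the number of coins and hence the running time blow up to $2^{O(s(n))}$'' conflates the cost of one pass over a configuration with the expected number of passes needed before a decisive coin outcome occurs.) Both obstructions disappear once configurations are kept at length $O(s(n))$ via the alternation characterization, which is the key idea your proposal is missing.
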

\begin{proof}
We modify the construction of Theorem \ref{theorem:ZeroInfTime} by setting up a debate about the computation of an alternating, rather than nondeterministic, Turing machine $M$ with space bound $s(n)$ on the input $w$. The result then follows using the fact \cite{CKS81} that $\mathsf{TIME(2^{s(n)})}=\mathsf{ASPACE(s(n))}$ for any $s(n)=\Omega(\log n)$.

We assume without loss of generality that the start state of $M$ is existential, no two existential or universal moves follow each other, and there are exactly two outgoing transitions from all non-halting states.

As in \cite{DS92}, P1 will try to convince the verifier of the membership of $w$ in the language $L$ of $M$ by showing that it can always pick a  sequence of existential transitions of $M$ that end up in an accept state, even if the universal transitions are selected by an adversary. The protocol begins with P1 writing a sequence of symbols representing the first two configurations in an ACP of $M$, thereby making the existential choice in the first transition. P0 then announces its choice about the next (universal) move of $M$ to P1 by using one of two prespecified symbols from the public communication alphabet $\Gamma$. P1 replies with the third and fourth configurations in the ACP, and the conversation goes on in this manner.  All through this public exchange  between P0 and P1 about the path for $M$ to follow, P0 is privately emitting its claims about the errors that P0 is committing, using a suitably adapted version\footnote{Note that $V$ sometimes needs to check for consistency between triples that correspond to configurations that are not one, but two $M$-moves away from each other in this new protocol.} of the method in the proof of Theorem \ref{theorem:ZeroInfTime}. 

The runtime and error bound analyses of the previous proof need no modification in this proof.
\end{proof}
   
Contrasting Theorem \ref{theorem:partial} with Fact \ref{fact:detverpardeb}, we see that probabilistic verifiers outperform their deterministic counterparts on partial-information debates as well.

Our techniques above can be seen easily to have an implication regarding a variant of the CIPS model. The proof of Fact \ref{fact:FSrecursive} in \cite{FS89} makes crucial use of the feature of that model which forbids the provers from seeing each other's messages. We can show that a new model obtained by allowing the P0 of the CIPS model to see all of P1's messages would still have the capability of handling all recursive languages, even if we downgrade the truthful prover to be deterministic.
\begin{theorem}
Every recursive language has a competing-prover interactive proof system where the provers are deterministic, the verifier is a probabilistic finite automaton that halts with probability 1, P0 displays zero information to P1, and P1 displays complete information to P0.
\end{theorem}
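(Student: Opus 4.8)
The plan is to adapt the zero-information construction of Theorem~\ref{theorem:ZeroInfTime}, replacing the space-bounded nondeterministic machine by a deterministic decider $M$ for the recursive language $L$. As there, P1 plays the prover of membership and is kept blind to P0, repeatedly presenting what it claims to be the (now unique) halting computation of $M$ on $w$ configuration by configuration, while P0, who may now see all of P1's public symbols, privately emits the same $\{0,\varsigma,\tau,\upsilon,\infty\}$ signals to flag either a transition error between two configurations or an infinite ``configuration.'' The verifier performs the same local three-symbol window check on every transition and runs the Freivalds-style alignment test of Theorem~\ref{theorem:ZeroInfTime} whenever P0 raises a transition-error claim. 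Since the provers' visibility is exactly that of Theorem~\ref{theorem:ZeroInfTime} (P0 sees P1, P1 does not see P0) and P1 is deterministic, no prover needs random coins, and recasting the silent spectator as one endpoint of a CIPS conversation is immediate.

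The one feature of Theorem~\ref{theorem:ZeroInfTime} that does not transfer directly is the fixed counter that maxed out at $2^{cs(n)}$: for a general recursive $L$ there is no a~priori bound on $M$'s resources. I would remove this assumption and instead let P1 carry a self-declared countdown counter in a dedicated track of each configuration, with $M$ defined to halt when the counter reaches $0$; correct decrementing of this counter is a purely local property, so $V$ need not store its value. When $w\in L$, P1 initializes the counter to the true running time and presents the genuine accepting computation in a single round, so $V$ halts and accepts with high probability as before. When $w\notin L$ the true computation rejects, so any round P1 ends in an accepting configuration must contain a transition error; moreover any infinite sequence of configurations must reset the counter infinitely often, and each reset is a transition error that P0 reports. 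The error-probability domination (each correct transition-error claim forcing rejection with probability $2^{l-2}$ times its chance of a spurious acceptance) and the accumulation of halting probability to $1$ then follow from the analysis of Theorem~\ref{theorem:ZeroInfTime}.

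The main obstacle I expect is the infinity check, and this is precisely where the interactive power of the CIPS model, rather than a silent debate verifier, must be used. In Theorem~\ref{theorem:ZeroInfTime} the verifier could test an $\infty$ claim by scanning a number of P1-symbols calibrated to the known bound $s(n)$, with false-rejection probability $O(s(n)2^{-rn})$, small only because $s(n)$ was polynomially bounded. For recursive $L$ the length of an honest configuration is finite but unbounded over inputs, so no fixed coin budget controls this error, and in particular a malicious P0 could raise a false $\infty$ claim on an arbitrarily long true configuration to threaten completeness. I would resolve this by using the two-way channel to reduce the finite-versus-infinite decision to a single bounded-error length comparison in the style of \cite{Fr81}: when P0 asserts $\infty$, the verifier has the relevant prover commit to a unary length for the configuration and, exploiting that P0 sees all of P1's output so the honest side can always answer consistently, tests this declaration against the symbols actually emitted instead of counting blindly. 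A genuinely infinite configuration admits no consistent finite length and is rejected with high probability while $V$ still halts with probability~$1$; an honest configuration of any finite length passes. With the $\infty$ case handled this way, the remaining error and expected-time bounds are inherited from Theorem~\ref{theorem:ZeroInfTime}, the verifier is a probabilistic finite automaton halting with probability~$1$, and the information pattern (P0 zero-information to P1, P1 complete-information to P0, both provers deterministic) is exactly as claimed, giving the CIPS counterpart of Fact~\ref{fact:FSrecursive}.
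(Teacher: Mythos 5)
Your construction has a genuine gap, and it lies exactly where you claim the analysis ``follows from Theorem~\ref{theorem:ZeroInfTime}.'' In that theorem, the accumulation of halting probability to $1$ against a cheating P1 is valid only because every configuration has length at most $s(n)$: each transition-error claim is processed with halting probability roughly $2^{-O(l(j+k))}$, and since the flagged positions $j,k$ are at most $s(n)$, each of the infinitely many reported errors contributes a halting probability bounded below by a fixed per-input constant, so the total diverges. For an arbitrary decider there is no such bound, and your self-declared countdown counter hands control of these positions to the cheating prover. When $w\notin L$, P1 can declare an enormous initial counter value (the verifier has no way to check the declared value), and later ``reset'' it to doubly-exponentially larger values $T_1, T_2, \ldots$; each reset is indeed a transition error that P0 must report, but it sits at a position of order $\log T_i$ inside the configuration, so the $i$th claim is processed with halting probability roughly $T_i^{-\Omega(l)}$. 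These probabilities are summable, so the verifier fails to halt with probability bounded away from $1$ --- in fact P1 can drive the halting probability arbitrarily close to $0$. Thus both requirements of the theorem (halting with probability $1$, and rejecting nonmembers with high probability) fail. Your counter argument establishes that infinitely many errors \emph{exist}, but not that their combined processing probability diverges, which is the actual content of the Theorem~\ref{theorem:ZeroInfTime} analysis you invoke. Relatedly, your diagnosis that the $\infty$ claim is ``the one feature that does not transfer'' understates the problem: the entire probability calibration, not just the infinity check, rests on the space bound.

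The paper's proof avoids this by using the two-way CIPS channel globally rather than only to adjudicate $\infty$ claims (which it eliminates entirely). After every move the verifier tosses a coin and publicly announces the outcome to both provers; on outcome $0$ the whole protocol restarts from the initial configuration. If the correct description of $M$'s halting computation on $w$ is $t$ symbols long, then in any run of $t$ consecutive $1$'s --- and infinitely many such runs occur with probability $1$ --- one of the two provers must lie within the first $t$ symbols after the restart, i.e., at a position bounded by a quantity depending only on the input. Each such window therefore yields a halting probability bounded below by a per-input constant, with the decision biased toward the correct answer exactly as in the transition-error analysis of Theorem~\ref{theorem:ZeroInfTime}; halting probability then genuinely accumulates to $1$ and the error can be tuned below any desired constant. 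In short, the restart mechanism re-imposes the length bound that the space bound provided in Theorem~\ref{theorem:ZeroInfTime}, which is the ingredient your proposal is missing; your length-commitment subprotocol for infinite configurations, besides being underspecified, does not address the finite-but-growing-configuration attack at all.
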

\begin{proof}
Let $M$ be a deterministic Turing machine that halts for all inputs.
We will use the protocol in the proof of  Theorem \ref{theorem:ZeroInfTime}. P1 will present ACP's of $M$ on the input string $w$, P0 will point out transition errors in the messages of P1, and the verifier $V$ will essentially process the transition error claims all in the same manner as in that proof. The difference is that P0 will not be making infinity claims, since the methods for processing those in  Theorem \ref{theorem:ZeroInfTime} are only applicable if $M$ has a polynomial space bound. Instead, we use the following technique from \cite{FS89}.

After each move, $V$ tosses a coin, and uses its new right to talk to the provers to announce the outcome to both provers. If the outcome is 1, $V$, P0 and P1 continue with the protocol as usual. Otherwise, they restart the protocol, with P1 emitting the first symbol of a string that it claims to be an ACP. Suppose that a correct description of the (accepting or rejecting) computation path of $M$ for $w$ in the proper format for P1 is $t$ symbols long. Either P1 or P0 must then tell a lie within the first $t$ symbols of their messages. There will be infinitely many contiguous subsequences of $t$ 1's appearing one after the other within the sequence of the announced coin outcomes. In each of these runs of 1's, $V$ will halt with a small probability, but the decision that it gives with that small probability will be overwhelmingly likely to be correct, as we saw in the analysis of the processing of the transition error claims in the proof of  Theorem \ref{theorem:ZeroInfTime}. Overall, $V$ halts with probability 1, and its error bound can be tuned down to be any desired nonzero value.
\end{proof}

\section{Concluding Remarks} 
\label{sec:conc}
We initiated the study of probabilistic checking of asymmetric debates, where the messages of one of the competing provers may be partially or completely hidden from the other prover. We proved that constant-space probabilistic verifiers outperform deterministic ones on both zero-information and partial-information debates. Some links with the related model of competing-prover interactive proof systems were also examined.

The following is an incomplete list of open questions related to this work:

\begin{itemize}
\item We do not know if Theorems \ref{theorem:ZeroInfTime} and \ref{theorem:partial} can be extended for space bounds that are not
polynomially bounded.
\item Although we have established some lower bounds, full characterizations of the power (with and without the simultaneous time restrictions) of complete-, zero-, and partial-information debate systems are still open for investigation.
\item Our protocols depend on the fact that the provers are deterministic, i.e. that they are functions from the input and the previous communication to the next symbol to be emitted. When we allow P1 to use a coin to determine whether to make an error in the configuration to be printed next, we no longer have a guarantee that there exists a corresponding P0 that will lead $V$ to rejection with high probability. This is reminiscent of the fact that the CIPS models of \cite{KLSRS00} seem to be weakened if the provers are allowed to have access to secret sources of unbiased random bits, although note that even a public coin seems enough in our case. Determining the power of debates with probabilistic verifiers is an open question.
\end{itemize}

\bibliographystyle{plain}
\bibliography{YakaryilmazSay}

\end{document}